\def\maketag@@@#1{\hbox{\m@th\normalfont\normalsize#1}}
\newcommand{\bGamma}{\mbox{\boldmath{$\Gamma$}}}
\newcommand{\bbm}{\begin{bmatrix}}
\newcommand{\ebm}{\end{bmatrix}}
\newcommand{\bit}{\begin{itemize}}
\newcommand{\eit}{\end{itemize}}
\newcommand{\ben}{\begin{enumerate}}
\newcommand{\een}{\end{enumerate}}
\newcommand{\bdesc}{\begin{description}}
\newcommand{\edesc}{\end{description}}
\newcommand{\bea}{\begin{array}}
\newcommand{\eea}{\end{array}}
\newcommand{\tr}{\mbox{\rm Tr}\, }
\newcommand{\beqa}{\begin{eqnarray}}
\newcommand{\eeqa}{\end{eqnarray}}
\newcommand{\ds}{\displaystyle}
\newcommand{\Comment}[1]{}
\newtheorem{prop}{Proposition}
\def\N{{\mathds N}}
\def\R{{\mathds R}}
\def\C{{\mathds C}}
\def\cC{\mbox{$\CMcal C$}}
\def\cL{\mbox{$\mathcal L$}}
\def\cN{\mbox{$\CMcal N$}}
\newcommand{\be}{\begin{equation}}
\newcommand{\ee}{\end{equation}}
\newcommand{\bzero}{{\mbox{\boldmath $0$}}}
\newcommand{\bm}{{\mbox{\boldmath $m$}}}
\newcommand{\bp}{\mbox{\boldmath $p$}}
\newcommand{\bor}{{\mbox{\boldmath $r$}}}
\newcommand{\bv}{{\mbox{\boldmath $v$}}}
\newcommand{\bx}{{\mbox{\boldmath $x$}}}
\newcommand{\bz}{{\mbox{\boldmath $z$}}}
\newcommand{\bA}{{\mbox{\boldmath $A$}}}
\newcommand{\bI}{{\mbox{\boldmath $I$}}}
\newcommand{\bM}{{\mbox{\boldmath $M$}}}
\newcommand{\bO}{{\mbox{\boldmath $O$}}}
\newcommand{\bR}{{\mbox{\boldmath $R$}}}
\newcommand{\bS}{{\mbox{\boldmath $S$}}}
\newcommand{\bU}{{\mbox{\boldmath $U$}}}
\newcommand{\bV}{{\mbox{\boldmath $V$}}}
\newcommand{\bX}{{\mbox{\boldmath $X$}}}
\newcommand{\bZ}{{\mbox{\boldmath $Z$}}}
\newcommand{\diag}{\mbox{diag}\, }
\newcommand{\btheta}{{\mbox{\boldmath $\theta$}}}
\newcommand{\bnu}{{\mbox{\boldmath $\nu$}}}
\newcommand{\bLambda}{{\mbox{\boldmath $\Lambda$}}}
\newcommand{\bsigma}{{\mbox{\boldmath $\sigma$}}}
\title{Learning Strategies for Radar Clutter Classification}
\author{Pia Addabbo, \IEEEmembership{Senior Member, IEEE}, Sudan Han,\\
Danilo Orlando, \IEEEmembership{Senior Member, IEEE}, and Giuseppe Ricci, \IEEEmembership{Senior Member, IEEE}
\thanks{Pia Addabbo is with Universit\`a degli studi Giustino Fortunato, Benevento, Italy. E-mail: {\tt 
p.addabbo@unifortunato.eu}.} 
\thanks{Sudan Han is with  the  National  Innovation  Institute  of  Defense  Technology, Beijing, China E-mail: xiaoxiaosu0626@163.com.}
\thanks{Danilo Orlando is with the Engineering Faculty of Universit\`a degli Studi ``Niccol\`o Cusano'', 
via Don Carlo Gnocchi 3, 00166 Roma, Italy. E-mail: {\tt danilo.orlando@unicusano.it}.}
\thanks{Giuseppe Ricci is with the Dipartimento di Ingegneria dell'Innovazione,
        Universit\`a del Salento, Via Monteroni, 73100 Lecce, Italy.
        E-mail: {\tt giuseppe.ricci@unisalento.it}.
        }
}
\begin{document}

\maketitle

\begin{abstract}
In this paper, we address the problem of classifying clutter returns into
statistically homogeneous subsets.
The classification procedures are devised assuming latent variables, which represent the classes
to which each range bin belongs, and three different models for the structure of the clutter covariance matrix. 
Then, the expectation-maximization algorithm is exploited in conjunction with cyclic estimation
procedures to come up with suitable estimates of the unknown parameters. Finally, the classification
is performed by maximizing the posterior probability that a range bin belongs to a specific class.
The performance analysis of the proposed classifiers is conducted over synthetic data as well as 
real recorded data and highlights that they represent a viable means to
cluster clutter returns with respect to their range.
\end{abstract}

\begin{IEEEkeywords}
Clutter, Diagonal Loading, Expectation-Maximization, Heterogeneous Environment, Interference Classification, Radar.
\end{IEEEkeywords}

\section{Introduction}
\label{Sec:Introduction}
In the past ten years, improvements in digital architectures and miniaturization technologies have wielded a 
significant impact in the evolution of radar systems which, consequently, are being equipped with more and more 
reliable and sophisticated functions \cite{ScheerMelvin,richards2013principles}. 
This increase in computational resources has led the radar community
to devise detection/estimation algorithms capable of facing with challenging scenarios and, more importantly,
of capitalizing on specific {\em a priori} knowledge about either the system or the environment or both.
In this context, a few examples related to the structural information about the interference covariance
matrix are provided by \cite{JunLiu03,LiuSun19,FOGLIA2017131,Cai1992,hongbinPersymmetric,CP00,Pascal},
where, at the design stage, it is assumed that the system illuminates the surveillance area through 
a symmetrically spaced linear array of sensors. This assumption lends both the interference covariance 
matrix and the steering vector a special structure which yields interesting processing gains at the price
of an additional computational load \cite{Nitzberg-1980,VanTrees4}.

Other approaches relying on a priori information exploit the possible symmetries in the interference
spectral properties \cite{FOGLIA2017131,fogliaPHE_SS,DeMaioSymmetric}. As a matter of fact,
ground clutter returns collected by a monostatic steady radar experience a symmetric power spectral density
centered around zero-Doppler frequency \cite{Billingsley00,Billingsley01}. Remarkably, such property allows
to double data used to estimate the clutter covariance matrix. 
Therefore, the above knowledge-based strategies represent an effective means to deal with situations where
the amount of training data, used for the estimation of the interference covariance matrix, 
is limited (sample-starved condition) otherwise leading to low-quality estimates and,
consequently, to a detection performance degradation. 
Besides the mentioned approaches, other widely 
used techniques to come up with suitable estimates
of the interference covariance matrix consist in the regularization (or shrinkage) of the sample 
covariance matrix towards a given matrix \cite{WieselHero,Tyler,gerlach}.

However, in practice, it is not seldom to meet situations where the presence of inhomogeneities 
makes the interference properties estimation an even more difficult task due to the fact that
such outliers should be censored as proposed in \cite{629144,767347,851934,922976}.
In these contributions, suitable techniques to detect and suppress the outliers are devised 
in order to make the training set homogeneous.
In fact, the homogeneity assumption for secondary data is a very common in detector 
design \cite[and references therein]{kelly1986adaptive,robey1992cfar,BOR-Morgan,GLRT-based} and
when it is no longer valid the performance degradation might become severe \cite{Melvin-2000}.
A more complete approach to the problem of generating homogeneous training sets would envisage 
an additional architectural layout capable of integrating and fusing
information coming from potential heterogeneous sources to depict a clear picture of the clutter properties. 
These sources can be internal or external to the system and comprise mapping data,
communication links, tracker feedback, or other inputs \cite{KB-melvin,KB-WICKS,KB_RANGASWAMY,KB_BENAVOLI}.

Now, note that environment maps might be useful to identify clutter edges and to cluster data into homogeneous 
subsets, whose cardinality can be increased by exploiting a priori information about the 
clutter properties as described before. 
Thus, classifying (or, otherwise stated, clustering) clutter returns would represent 
a desirable feature for modern radar systems. Examples of clutter classifiers are provided by 
\cite{HaykinClass,HaykinClass2}, where the authors build up a neural network or process suitable features
to distinguish between echoes from weather, birds, and aircrafts. Other classifiers are aimed at identifying
the distribution for clutter data \cite{522628,7436109,5674066}, the specific structure of the clutter covariance 
matrix \cite{VincenzoClass}, or the variability of clutter power over the range bins \cite{8801934}.

In this paper, we focus on the problem of partitioning training data into homogeneous subsets and 
we assume that only partial information about the environment is available at the radar receiver, 
namely that a given number of clutter boundaries is present. Then, we design a classification procedure
capable of partitioning the secondary data set into subsets containing statistically homogeneous data.
To this end, we jointly exploit the expectation-maximization (EM) algorithm \cite{Dempster77}
and the latent variable model \cite{murphy2012machine}. The latter tool allows us to introduce
hidden random variables which represent the classes, namely, uniform clutter regions, to which each 
range cell belongs.
Thus, at the end of the procedure, the clustering is accomplished by estimating 
the {\em a posteriori} probability that a range bin
belongs to a specific class. More importantly, we consider three different models for the covariance matrix of the disturbance and more precisely the following
\begin{itemize}
\item
the disturbance of each class is characterized by its own Hermitian covariance matrix;
\item
different classes share a common structure of the covariance matrix, but they have 
different power values (clutter-dominated environment);
\item
noise returns consist of a thermal noise component (whose power is independent of the class) 
plus a clutter component; as in the previous case clutter returns 
share the same structure of the clutter covariance matrix, but each class is characterized
by its own clutter power.
\end{itemize}
The preliminary performance analysis shows the effectiveness of the proposed methods in clustering data.

The remainder of the paper is organized as follows. The next section contains the problem formulation, whereas
Section III is devoted to the design of the classification architectures. Illustrative examples and discussion
about the classification performance are provided in Section IV. Finally, in Section V, we draw the conclusions 
and lay down possible future research lines. Derivations are confined to the Appendices.

\subsection{Notation}
In the sequel, vectors and matrices are denoted by boldface lower-case and upper-case letters, respectively.
The $(i,j)$th entry of a matrix $\bA$ is indicated by $\bA(i,j)$.
Symbols $\det(\cdot)$, $\tr(\cdot)$, $(\cdot)^T$, and $(\cdot)^\dag$ denote the determinant, trace, transpose, 
and conjugate transpose, respectively.
As to numerical sets, $\N$ is the set of natural numbers, $\R$ is the set of real numbers, $\R^{N\times M}$ is the Euclidean space of $(N\times M)$-dimensional 
real matrices (or vectors if $M=1$), 
$\C$ is the set of 
complex numbers, and $\C^{N\times M}$ is the Euclidean space of $(N\times M)$-dimensional 
complex matrices (or vectors if $M=1$). 
$\bI$ and $\bzero$ stand for the identity matrix and the null vector or matrix of proper size. 
Given $a_1, \ldots, a_N \in\C^{N\times 1}$, $\diag(a_1, \ldots, a_N)\in\C^{N\times N}$ indicates 
the diagonal matrix whose $i$th diagonal element is $a_i$.
The acronym pdf and pmf stand for probability density function and probability mass function, respectively, whereas
the conditional pdf of a random variable $x$ given 
another random variable $y$ is denoted by $f(x|y)$. 
Finally, we write $\bx\sim\cC\cN_N(\bm, \bM)$ if $\bx$ is a 
complex circular $N$-dimensional normal vector with mean $\bm$ and positive definite covariance matrix $\bM$
while given a matrix $\bX=[\bx_1 \cdots \bx_M]\in\C^{N\times M}$, writing $\bX\sim\cC\cN_N(\bm,\bM,\bI)$ means 
that $\bx_i\sim\cC\cN_N(\bm, \bM)$, $i=1,\ldots,M$, and the $\bx_i$s are statistically independent.

\section{Problem Formulation and Preliminary Definitions}
\label{Sec:Problem_Formulation}
Consider a radar system equipped with $N\geq 2$ space, time, or space-time channels which illuminates
the operating area consisting of $K$ range bins. 
The signals backscattered by these range cells are suitably conditioned
and sampled by the signal-processing unit to form $N$-dimensional complex vectors denoted by $\bz_1,\ldots,\bz_K$.
Now, let us assume that, from a statistical point of view, the observed environment is temporally 
stationary, whereas its statistical properties may change over the range due, for instance, 
to the presence of clutter boundaries \cite{Richards}. Otherwise stated, we assume that the set of vectors
can be partitioned into $L$ subsets of statistically homogeneous data; the $l$th subset
is denoted by
\be
\Omega_l = \{ \bz_{i_{l,1}},\ldots,\bz_{i_{l,K_l}} \}
\ee
where $K_l$, $l=1, \ldots, L,$ denotes its cardinality.
Thus, the elements of $\Omega_l$ share the same distributional parameters which are generally different from those
associated to the distribution of $\Omega_m$, $m\neq l$. Specifically, we assume that
\be
[\bz_{i_{l,1}} \cdots \bz_{i_{l,K_l}}] \sim \cC\cN_N(\bzero,\bM_l,\bI), \ l=1,\ldots,L,
\label{eq_zk}
\ee
where $\bM_l$ is unknown.

Summarizing, we are interested in estimating the subsets $\Omega_l$ along with the associated unknown
parameter $\bM_l$, $l=1,\ldots,L$. To this end, in the next section we devise
a classification procedure relying on the joint exploitation of the expectation maximization (EM) 
algorithm \cite{Dempster77} and the latent variable model \cite{murphy2012machine}. 
Moreover, besides the most general structure for the clutter covariance matrix, we consider two additional
models which account for possible clutter power variations and diagonal loading due to thermal 
noise.

\section{Classification Architecture Designs}
\label{Sec:Architecture_Designs}
Data classification task is accomplished by introducing $K$ independent and identically
distributed discrete random variables, $c_k$s say, which take on values in $\{ 1,\ldots, L \}$ with unknown pmf
\be
P(c_k=l) = p_l, \quad k=1,\ldots,K,
\ee
and\footnote{Recall that $\sum\limits_{l=1}^L p_l=1$.} 
such that when $c_k=l$, then $\bz_k\sim\cC\cN_N(\bzero,\bM_l)$. Under this assumption, 
it naturally follows that the pdf of $\bz_k$ can be written as
\begin{align}
f(\bz_k;\btheta) &= \sum_{l=1}^L p_l f(\bz_k|c_k=l;\bM_l) \nonumber
\\
&=E_{c_k}[f(\bz_k|c_k; \btheta))] ,
\end{align}
where $E_{c_k}[\cdot]$ denotes the statistical expectation with respect to $c_k$,
\be
\btheta=\left[ \bp^T, \bsigma^T \right]^T
\ee
$\bp=[p_1 \cdots p_L]^T$, 
$\bsigma= \left[ \bnu^T(\bM_1) \cdots \bnu^T(\bM_L) \right]^T$,
$\bnu(\cdot)$ 
a vector-valued function selecting the generally distinct entries of the matrix argument, and
\be
f(\bz_k|c_k=l;\bM_l)=\frac{1}{\pi^N \det(\bM_l)}\exp\{ -\tr[\bM_l^{-1} \bz_k\bz_k^\dag] \}.
\ee
Now, obtaining possible closed-form maximum likelihood estimates of the unknown parameters, namely
$\bM_1,\ldots,\bM_L$ and $\bp$, is not an easy task 
(at least to the best of authors' knowledge). For this reason, we resort 
to the EM-based algorithms, that provide
closed-form updates for the parameter estimates at each step and reach at least a 
local stationary point. To this end, let us write the joint log-likelihood of $\bZ=[\bz_1 \cdots \bz_K]$
as follows
\begin{align}
\cL(\bZ; \btheta) &= \sum_{k=1}^K \log \sum_{c_k=1}^L f( z_k,c_k; \btheta) \nonumber
\\
&= \sum_{k=1}^K \log \sum_{l=1}^L p_l f( z_k|c_k=l; \bM_l).
\end{align}
As observed before, the EM algorithm is a recursive approach to the estimation of the parameter $\btheta$:
its $h$th iteration is aimed at computing $\hat{\btheta}^{(h)}$ starting from 
the estimate at the previous iteration, $\hat{\btheta}^{(h-1)}$ say,
to form a nondecreasing sequence of log-likelihood values, namely
\be
\cL(\bZ; \hat{\btheta}^{(h)}) \geq \cL(\bZ; \hat{\btheta}^{(h-1)}).
\ee
Obviously, an initial estimate of $\btheta$,  
$\hat{\btheta}^{(0)}$ say, is necessary to initialize the algorithm as well as a reasonable stopping criterion as, for instance, a maximum number of iterations,
$h_{\max}$ say.
The EM consists of two steps referred to as the E-step and the M-step, respectively.
The E-step leads to the computation of the following quantity 
\begin{align}
q_k^{(h-1)}\left(l\right)&=p( c_k=l |  \bz_k; \hat{\btheta}^{(h-1)}) \nonumber
\\
&= \frac{\ds f( \bz_k |  c_k=l; \widehat{\bM}_l^{(h-1)}) \hat{p}_l^{(h-1)} } 
{\ds f( \bz_k ; \hat{\btheta}^{(h-1)}) } \nonumber
\\
&=\ds\frac{\ds f( \bz_k |  c_k=l; \widehat{\bM}_l^{(h-1)}) \hat{p}_l^{(h-1)} } 
{\ds \sum_{l'=1}^L f( \bz_k | c_k=l'; \widehat{\bM}_{l'}^{(h-1)})\hat{p}_{l'}^{(h-1)} },
\end{align}
whereas the M-step requires to solve the following problem 
\begin{align}
&\hat{\btheta}^{(h)}=\arg\max_{\btheta}
\sum_{k=1}^K  \sum_{l=1}^L q_k^{(h-1)}\left(l\right) \log \frac{f( \bz_k| c_k=l; \bM_l)p_l}
{q_k^{(h-1)}\left(l\right)} \nonumber
\\
& \Rightarrow 
\hat{\btheta}^{(h)}=\arg\max_{\btheta}
\left\{
\sum_{k=1}^K\sum_{l=1}^L  q_k^{(h-1)}\left(l\right) \log {f( \bz_k| c_k=l; \bM_l)} \right. \nonumber
\\
& \quad \quad \quad \quad \quad \quad 
\quad \quad \quad
\left.+\sum_{k=1}^K\sum_{l=1}^L q_k^{(h-1)}\left(l\right) \log p_l
\right\}.
\end{align}
Note that the maximization with respect to $p_l$, $l=1,\ldots,L$, is independent of
that over $\bM_l$, $l=1,\ldots,L$, and, hence, we can proceed by separately addressing these two problems.
Starting from the optimization over $\bp$, observe that it can be solved by using 
the method of Lagrange multipliers, to take into account the constraint 
\be
\sum_{l=1}^L p_l=1.
\ee
Thus, it is not difficult to show that
\be
\hat{p}_l^{(h)}= \frac{1}{K} \sum_{k=1}^K q_k^{(h-1)}\left(l\right).
\ee
Finally, in order to come up with the estimates of $\bM_1,\ldots,\bM_L$, we solve the following problem
\be
\widehat{\bsigma}^{(h)}=
\arg\max_{\bsigma}
\sum_{k=1}^K\sum_{l=1}^L  q_k^{(h-1)}(l) \log {f( \bz_k| c_k=l; \bM_l)},
\ee
where three different forms for the $\bM_l$, $l=1,\ldots,L$, are considered, namely
\begin{enumerate}
\item $\bM_l$ is a positive definite Hermitian matrix;
\item $\bM_l=\sigma^2_{c,l}\bM$, where $\sigma^2_{c,l}>0$ represents the clutter power which might vary over the
range profile when a clutter edge occurs, while $\bM$ is the common structure shared by the interference of
the $K$ range bins;
\item $\bM_l=\sigma^2_n\bI + \bR_l$, where $\sigma^2_n>0$ is the unknown thermal noise power and 
$\bR_l\in\C^{N\times N}$ denotes the clutter contribution to the interference of the $l$th range bin whose
rank, $r_l$ say, is assumed for the moment known.
\end{enumerate}
Then, the estimates of the unknown parameters for the above cases are provided by the following propositions.

\begin{prop}\label{Prop:EM-M_estimates01}
Assume that $K\geq N$, then an approximation to the relative maximum point of 
\be
g_1(\bM_1,\ldots,\bM_L)=\sum_{k=1}^K\sum_{l=1}^L  q_k^{(h-1)}\left(l\right) \log {f( \bz_k| c_k=l; \bM_l)}
\ee
has the following expression
\be
\widehat{\bM}^{(h)}_l=\frac{\sum_{k=1}^K q_k^{(h-1)}(l) \bz_k\bz_k^\dag}
{\sum_{k=1}^K q_k^{(h-1)}(l)}, \quad l=1,\ldots,L.
\ee
\end{prop}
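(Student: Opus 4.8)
The plan is to exploit the additive separability of $g_1$ across the class index and thereby reduce the problem to a single weighted Gaussian covariance-fitting task. First I would insert the explicit kernel $f(\bz_k|c_k=l;\bM_l)$ into $g_1$ and collect terms, obtaining
\be
g_1=\sum_{l=1}^L\left[-W_l\log\det(\bM_l)-\tr\!\big(\bM_l^{-1}\bS_l\big)\right]-NK\log\pi,
\ee
where $W_l=\sum_{k=1}^K q_k^{(h-1)}(l)$ and $\bS_l=\sum_{k=1}^K q_k^{(h-1)}(l)\,\bz_k\bz_k^\dag$ (the additive constant uses $\sum_{l}q_k^{(h-1)}(l)=1$). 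Since the $l$th summand involves only $\bM_l$, the maximization decouples into $L$ independent problems, and it suffices to maximize $g_{1,l}(\bM_l)=-W_l\log\det(\bM_l)-\tr(\bM_l^{-1}\bS_l)$ over the cone of Hermitian positive definite matrices.

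For each subproblem I would impose first-order stationarity through the directional derivative along an arbitrary Hermitian perturbation $\bH$. Using $\det$ and trace differentials one gets
\be
-W_l\tr(\bM_l^{-1}\bH)+\tr(\bM_l^{-1}\bH\bM_l^{-1}\bS_l)=\tr\!\big[(-W_l\bM_l^{-1}+\bM_l^{-1}\bS_l\bM_l^{-1})\bH\big],
\ee
where the cyclic property of the trace has been used. Requiring this to vanish for every Hermitian $\bH$ and invoking nondegeneracy of the Hermitian trace pairing forces $-W_l\bM_l^{-1}+\bM_l^{-1}\bS_l\bM_l^{-1}=\bzero$; pre- and post-multiplying by $\bM_l$ and rearranging yields the unique stationary point $\widehat{\bM}_l^{(h)}=\bS_l/W_l$, which is exactly the claimed weighted sample covariance.

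To confirm that this stationary point is a maximizer rather than a saddle, I would reparametrize in the precision matrix $\bPhi_l=\bM_l^{-1}$, under which $g_{1,l}$ becomes $W_l\log\det(\bPhi_l)-\tr(\bPhi_l\bS_l)$. Because $\log\det(\cdot)$ is concave on the positive definite cone and $W_l>0$ while the trace term is linear in $\bPhi_l$, the reparametrized objective is concave, so its unique stationary point is the global maximizer; inverting $\bPhi_l$ recovers $\widehat{\bM}_l^{(h)}$. This argument also shows the optimizer is attained in the interior whenever $\bS_l\succ\bzero$.

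The main obstacle is a domain and conditioning issue rather than a calculus one: the estimator is a legitimate point of the positive definite cone only if $\bS_l$ is invertible. Since the weights $q_k^{(h-1)}(l)$ are strictly positive posterior probabilities, $\bS_l$ is a positive combination of the rank-one matrices $\bz_k\bz_k^\dag$, and it is positive definite precisely when the $\bz_k$ span $\C^N$; this is where the hypothesis $K\ge N$ enters, generically guaranteeing full rank. When a given class accumulates appreciable weight on fewer than $N$ effectively distinct bins, $\bS_l$ becomes ill-conditioned or singular, and $\widehat{\bM}_l^{(h)}$ is then only an \emph{approximate} relative maximizer confined to the boundary of the feasible set — a caveat that motivates the structured and diagonally loaded models treated in the subsequent propositions.
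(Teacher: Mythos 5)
Your proof is correct and follows essentially the same route as the paper's: decouple $g_1$ over the class index into $L$ independent weighted Gaussian log-likelihoods and impose first-order stationarity, which yields the weighted sample covariance $\bS_l/W_l$ (the paper does this via formal complex matrix derivatives, you via directional derivatives along Hermitian perturbations — a cosmetic difference). You also go beyond the paper's appendix, which stops at stationarity: the concavity argument in the precision matrix $\bPhi_l=\bM_l^{-1}$ certifying a global maximum, and the identification of $K\geq N$ as the hypothesis ensuring $\bS_l$ is (generically) positive definite, are both absent from the paper and are sound additions.
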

\begin{proof}
See Appendix \ref{App:ProofProp1}.
\end{proof}

\begin{prop}\label{Prop:EM-M_estimates02}
Assume that $K\geq N$ and form $2$ for $\bM_l$, then, given the function
\be
g_2(\bsigma_c^2,\bM)=\sum_{k=1}^K\sum_{l=1}^L  q_k^{(h-1)}\left(l\right) \log {f( \bz_k| c_k=l; \sigma^2_{c,l}\bM)}
\ee
where $\bsigma^2_c=[\sigma^2_{c,1} \cdots \sigma^2_{c,L}]^T$, 
an approximation to the relative maximum point can be achieved by means of the following
cyclic procedure with respect to the iteration index $t$, $t=1, \ldots, t_{\max}$,
(with $t_{\max}$ a proper design parameter)
\be
(\hat{\sigma}^2_{c,l})^{(1),(h)} =
\frac{\sum_{k=1}^K  q_k^{(h-1)}(l)  \bz_k^{\dagger} (\bM^{(t_{\max}),(h-1)})^{-1} \bz_k}{N 
\sum_{k=1}^K  q_k^{(h-1)}(l)},
\label{eq1_EM_first}
\ee
\be
\widehat{\bM}^{(t),(h)} =\frac{1}{K}
{\sum_{k=1}^K \sum_{l=1}^L  q_k^{(h-1)}(l) \frac{\bz_k \bz_k^{\dagger} }{(\hat{\sigma}^2_{c,l})^{(t),(h)}}},
\label{eq2_EM}
\ee
$t=1, \ldots, t_{\max}$, and
\be
(\hat{\sigma}^2_{c,l})^{(t),(h)} =
\frac{\sum_{k=1}^K  q_k^{(h-1)}(l)  \bz_k^{\dagger} (\bM^{(t-1),(h)})^{-1} \bz_k}{N 
\sum_{k=1}^K  q_k^{(h-1)}(l)},
\label{eq1_EM}
\ee
$t=2, \ldots, t_{\max}$, $l=1, \ldots, L$.

\end{prop}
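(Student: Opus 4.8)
The plan is to substitute the structured covariance $\bM_l=\sigma^2_{c,l}\bM$ into the conditional density, reduce $g_2$ to a transparent form, and then attack the resulting joint maximization by block coordinate ascent. Using $\det(\sigma^2_{c,l}\bM)=(\sigma^2_{c,l})^N\det(\bM)$, $(\sigma^2_{c,l}\bM)^{-1}=(\sigma^2_{c,l})^{-1}\bM^{-1}$, and $\bz_k^\dag\bM^{-1}\bz_k=\tr[\bM^{-1}\bz_k\bz_k^\dag]$, I would write, up to an additive constant,
\be
g_2 = -\sum_{k=1}^K\sum_{l=1}^L q_k^{(h-1)}(l)\left[ N\log\sigma^2_{c,l}+\log\det\bM+\frac{\bz_k^\dag\bM^{-1}\bz_k}{\sigma^2_{c,l}}\right].
\ee
The key structural observation is that only the products $\sigma^2_{c,l}\bM$ are identifiable, so $g_2$ is not jointly concave in $(\bsigma_c^2,\bM)$ and no closed-form joint maximizer can be expected; this is exactly what motivates the cyclic scheme and the wording ``approximation to the relative maximum point''.

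For the power update (with $\bM$ held fixed) I would note that the objective decouples across $l$. Setting $a_l=\sum_k q_k^{(h-1)}(l)$ and $b_l=\sum_k q_k^{(h-1)}(l)\,\bz_k^\dag\bM^{-1}\bz_k$, each subproblem reduces to maximizing $-Na_l\log\sigma^2_{c,l}-b_l/\sigma^2_{c,l}$ over $\sigma^2_{c,l}>0$. Since $a_l,b_l>0$, this scalar function tends to $-\infty$ at both ends of the positive axis and has a single interior stationary point; differentiating and equating to zero yields $\hat\sigma^2_{c,l}=b_l/(Na_l)$, confirmed as a maximizer by the negative second derivative, which coincides with \eqref{eq1_EM_first} and \eqref{eq1_EM}. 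The only bookkeeping point is the index: the first pass uses the matrix $\bM^{(t_{\max}),(h-1)}$ carried over from the previous EM iteration, after which the powers and $\bM$ alternate within iteration $h$.

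For the structure update (with the powers held fixed) I would introduce the weighted scatter $\bS=\sum_{k=1}^K\sum_{l=1}^L q_k^{(h-1)}(l)\,\bz_k\bz_k^\dag/\sigma^2_{c,l}$ and use $\sum_l q_k^{(h-1)}(l)=1$ to see that, up to constants, $g_2=-K\log\det\bM-\tr[\bM^{-1}\bS]$. The clean move here is to reparametrize $\bA=\bM^{-1}$: the objective becomes $K\log\det\bA-\tr[\bA\bS]$, which is strictly concave on the cone of Hermitian positive definite matrices, so its unique stationary point is the global maximizer. Setting the gradient to zero gives $K\bA^{-1}=\bS$, i.e. $\widehat\bM=\bS/K$, which is precisely \eqref{eq2_EM}.

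Each of these steps is individually routine; the main obstacle is conceptual rather than computational. Because of the scale ambiguity between $\sigma^2_{c,l}$ and $\bM$, the joint problem has no isolated maximizer, so what must really be argued is that the alternating updates form a monotone ascent of $g_2$ --- each conditional step being an exact maximization, $g_2$ cannot decrease --- and that the iteration therefore converges to a stationary point of $g_2$, a relative maximum along each block. I would close by remarking that this block-coordinate monotonicity is what, together with the exact maximization of the EM surrogate, preserves the nondecreasing-likelihood property inherited from the outer EM loop.
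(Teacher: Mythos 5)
Your proposal is correct and follows essentially the same route as the paper: both derive the identical first-order conditions for the $\sigma^2_{c,l}$s and for $\bM$ (using $\sum_{l}q_k^{(h-1)}(l)=1$ so that the weights sum to $K$), observe that the coupled system admits no closed-form joint solution, and resolve this by the alternating (cyclic) updates \eqref{eq1_EM_first}--\eqref{eq1_EM}, concluding with the monotone-ascent property of block coordinate maximization. Your version is in fact slightly more careful than the paper's, which only imposes stationarity: you verify that each block update is an exact conditional maximizer (negative second derivative for the scalar powers, strict concavity of $K\log\det\bA-\tr[\bA\bS]$ in $\bA=\bM^{-1}$), and you correctly flag the scale ambiguity between $\bsigma_c^2$ and $\bM$ as the reason only a stationary point, not an isolated maximum, can be claimed.
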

\begin{proof}
See Appendix \ref{App:ProofProp2}.
\end{proof}


\begin{prop}\label{Prop:EM-M_estimates03}
Assume that $r_l<N$, $l=1,\ldots,L$, is known and form $3$ for $\bM_l$, then an approximation to the relative maximum point of the function
\begin{multline}
g_3(\sigma_n^2,\bR_1,\ldots,\bR_L)\\
=\sum_{k=1}^K\sum_{l=1}^L  q_k^{(h-1)}\left(l\right) \log {f( \bz_k| c_k=l; \sigma^2_{n}\bI+\bR_l)},
\label{eqn:objectiveFunctionProp3}
\end{multline}
can be obtained as follows
\begin{align}
\hat{\sigma}^{2(h)}_n &= \frac{\ds\sum_{l=1}^L \sum_{n=r_l+1}^N \gamma^{(h-1)}_{l,n}}
{\ds\sum_{l=1}^L \sum_{k=1}^K q^{(h-1)}_k(l) (N-r_l)},
\label{eqn:prop3_sigma_c}
\\
\widehat{\bR}^{(h)}_l(r_l) &=\widehat{\bU}^{(h)}_l \widehat{\bLambda}^{(h)}_l(r_l) (\widehat{\bU}^{(h)}_l)^\dag,
\end{align}
where $\widehat{\bU}^{(h)}_l$ is the unitary matrix whose columns are the eigenvectors 
corresponding to the eigenvalues $\gamma^{(h-1)}_{l,1}\geq \gamma^{(h-1)}_{l,2} \geq \ldots 
\geq \gamma^{(h-1)}_{l,N}$
of the matrix 
\be
\bS_l^{(h-1)}=\sum_{k=1}^K q_k^{(h-1)}(l) \bz_k\bz_k^\dag
\ee
and
\begin{multline}
\widehat{\bLambda}^{(h)}_l=\diag\left(\max\left\{\frac{\gamma^{(h-1)}_{l,1}}{\sum_{k=1}^K q_k^{(h-1)}(l)}
-\hat{\sigma}^{2(h)}_n,0\right\},\ldots,\right.
\\
\left.\max\left\{\frac{\gamma^{(h-1)}_{l,r_l}}{\sum_{k=1}^K q_k^{(h-1)}(l)}-\hat{\sigma}^{2(h)}_n,0\right\},0,\ldots,0\right).
\end{multline}
\end{prop}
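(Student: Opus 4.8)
The plan is to expand the Gaussian log-density, discard the terms independent of the unknowns, and reduce $g_3$ to a function of the structured covariances $\bM_l=\sigma_n^2\bI+\bR_l$ alone. Writing $w_l=\sum_{k=1}^K q_k^{(h-1)}(l)$ and recognising the weighted scatter matrix $\bS_l^{(h-1)}=\sum_{k=1}^K q_k^{(h-1)}(l)\bz_k\bz_k^\dag$, the objective becomes, up to an additive constant,
\be
g_3=-\sum_{l=1}^L\left[ w_l\log\det(\sigma_n^2\bI+\bR_l)+\tr\!\big[(\sigma_n^2\bI+\bR_l)^{-1}\bS_l^{(h-1)}\big]\right].
\ee
For a fixed value of the shared parameter $\sigma_n^2$ this decouples across $l$, so the first task is to maximise each summand over a positive semidefinite $\bR_l$ of rank $r_l$; equivalently, over an $\bM_l$ whose $N-r_l$ smallest eigenvalues are pinned to $\sigma_n^2$ and whose remaining eigenvalues are free but no smaller than $\sigma_n^2$.

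First I would fix the eigenvalues $\mu_{l,1}\geq\cdots\geq\mu_{l,N}$ of $\bM_l$ and optimise over its eigenvectors. Since $\log\det\bM_l=\sum_j\log\mu_{l,j}$ does not depend on the eigenvectors, only $\tr[\bM_l^{-1}\bS_l^{(h-1)}]$ matters, and a classical trace inequality (von Neumann/Ruhe) gives $\tr[\bM_l^{-1}\bS_l^{(h-1)}]\geq\sum_{j=1}^N\mu_{l,j}^{-1}\gamma^{(h-1)}_{l,j}$, with equality exactly when $\bM_l$ and $\bS_l^{(h-1)}$ share the same co-sorted eigenbasis. This forces $\widehat{\bU}^{(h)}_l$ to be the eigenvector matrix of $\bS_l^{(h-1)}$ and reduces the problem to the scalar programme $\min_{\mu_{l,j}}\sum_j[w_l\log\mu_{l,j}+\mu_{l,j}^{-1}\gamma^{(h-1)}_{l,j}]$. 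Differentiating the $j$th term yields the unconstrained stationary point $\mu_{l,j}=\gamma^{(h-1)}_{l,j}/w_l$; imposing $\mu_{l,j}\geq\sigma_n^2$ on the $r_l$ clutter directions and $\mu_{l,j}=\sigma_n^2$ on the remaining ones produces the water-filling solution $\lambda_{l,j}=\max\{\gamma^{(h-1)}_{l,j}/w_l-\sigma_n^2,0\}$, which is precisely $\widehat{\bLambda}^{(h)}_l(r_l)$. That the clutter eigenvalues attach to the largest $r_l$ eigenvalues of $\bS_l^{(h-1)}$ is confirmed by comparing the per-direction objective values, which increase with $\gamma^{(h-1)}_{l,j}$ above the noise floor.

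It remains to optimise over $\sigma_n^2$ after substituting the above. Here lies the main obstacle: the value of $\sigma_n^2$ determines which positivity constraints are active through the $\max\{\cdot,0\}$ operation, while the optimal $\sigma_n^2$ depends in turn on the resulting active set, so the exact joint maximiser is not available in closed form, which is why the statement claims only an approximation to the relative maximum point. I would resolve this by working in the operating regime where the top $r_l$ eigenvalues exceed the noise floor, so that the clamp is inactive and $\mu_{l,j}=\gamma^{(h-1)}_{l,j}/w_l$ for $j\leq r_l$; the corresponding terms are then constant in $\sigma_n^2$, leaving only the tail contribution $\sum_{l=1}^L\sum_{j=r_l+1}^N[w_l\log\sigma_n^2+\gamma^{(h-1)}_{l,j}/\sigma_n^2]$. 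Setting its derivative to zero gives $\hat{\sigma}^{2(h)}_n=\sum_{l}\sum_{j=r_l+1}^N\gamma^{(h-1)}_{l,j}\big/\sum_{l}w_l(N-r_l)$, the claimed estimate, after which the positivity and rank constraints on $\bR_l$ are reinstated through the $\max\{\cdot,0\}$ appearing in $\widehat{\bLambda}^{(h)}_l$.
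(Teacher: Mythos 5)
Your proposal is correct and follows essentially the same route as the paper's proof: reduction of $g_3$ to the weighted scatter matrices $\bS_l^{(h-1)}$, eigenvector alignment via the von Neumann/Mirsky trace inequality, water-filling for the clutter eigenvalues, and estimation of $\sigma_n^2$ from the tail eigenvalues under the assumption that the $\max\{\cdot,0\}$ clamp is inactive on the top $r_l$ directions. If anything, you are more explicit than the paper about why the result is only an approximate maximizer (the circular dependence between the active set and $\sigma_n^2$), a point the paper passes over silently when it substitutes the unclamped $\hat{\lambda}_{l,m}$ before optimizing over $\sigma_n^2$.
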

\begin{proof}
See Appendix \ref{App:ProofProp3}.
\end{proof}
Note that the last proposition supposes that $r_l$, $l=1,\ldots,L$, is known. However,
it is clear that such assumption does not exhibit a practical value; however, the results
provided by Proposition \ref{Prop:EM-M_estimates03} can suitably be exploited in conjunction with an estimator
of $\bor=[r_1,\ldots,r_L]^T$. To this end, we follow the lead of
\cite{ECCMYan} and exploit the MOS rules to build up the following estimator 
for\footnote{Notice that we are neglecting 
some constants that do not depend on $r_l$ and, hence, do not enter the decision process.} $\bor$
\begin{align}
\hat{\bor} &=\arg\min_{\bor} \left\{ 2\sum_{l=1}^{L}\sum_{m=1}^{r_l}\log
\left( \frac{\gamma^{(h)}_{l,m}}{\sum_{k=1}^K q_k^{(h)}(l)} 
\right)\sum_{k=1}^K q_k^{(h)}(l)
\right. \nonumber
\\
&+2\sum_{l=1}^{L}(N-r_l)\log\left[ (\hat{\sigma}^2_n)^{(h)} \right]
\sum_{k=1}^K q_k^{(h)}(l) \nonumber
\\
&\left. + 2 \sum_{l=1}^{L} r_l \sum_{k=1}^K q_k^{(h)}(l)+\frac{2}{(\hat{\sigma}^2_n)^{(h)}}
\sum_{l=1}^{L}\sum_{m=r_l+1}^N
\gamma^{(h)}_{l,m}
+\xi(\bor)
\right\},
\label{eq25}
\end{align}
where $\xi(\bor)$ is a penalty term related to the number of unknown parameters and has the following expression
$\xi(\bor)=\sum\limits_{l=1}^{L}[r_l(2N-r_l)+1]k_p$ with
\be
k_p=
\begin{cases}
2, & \mbox{AIC},
\\
1+a, \ a\geq 1  & \mbox{GIC},
\\
\log(2KN), & \mbox{BIC}.
\end{cases}
\ee
Once the unknown quantities have been estimated, data 
classification can be accomplished by exploiting the following rule
\be
\forall k=1,\ldots,K: \bz_k\sim\cC\cN_N(\bzero,\widehat{\bM}_{\hat{l}_k})
\ee
where
\be
\hat{l}_k=\arg\max_{l=1,\ldots,L}q_k^{(h_{\max})}(l).
\ee

\section{Illustrative Examples and Discussion}
\label{Sec:Simulation}

In this section, the performance of the three proposed classification architectures are assessed 
drawing upon synthetic data as well as real recorded data. Specifically, in the next section, the analysis
is conducted by means of standard Monte Carlo counting techniques, while in the last section, 
the procedures are applied to the Phase One data.

\subsection{Simulated Data}
In the following, data are generated resorting to independent Monte Carlo trials and using two different models 
for the structure of the clutter covariance matrix. In the first case, we suppose the prevalence of the 
clutter contribution assuming an exponential shaped clutter PSD, whereas, in the second case, we do not neglect 
the thermal noise contribution and model the clutter samples as the summation of the echoes from 
patches at distinct angles. All the numerical examples assume $N=16$, $K=96$, and $L=3$. Moreover, the 
presented analysis consists of a first qualitative part, where the classification outcomes of single Monte
Carlo trial are shown, and a second quantitative part, where the root mean square classification error (RMSCE) is
evaluated over 1000 independent Monte Carlo runs. The classification error is defined as the number of
range bins whose class is not correctly identified.

\subsubsection{Prevalence of the clutter contribution}
\label{Subsec:case1}
The examples considered here are aimed at investigating the behavior of Proposition 1 and 2 when 
\be
\bM_l = \sigma_{c,l}^2\bM_c,
\label{eq_model1}
\ee 
where $\sigma_{c,l}^2$ is the clutter power of the $l$th class, and $\bM_c$ is the common clutter structure, 
such that $\bM_c(i,j)=\rho^{|i-j|}$ with $\rho=0.9$.
It is important to observe that for the considered model, the classification procedure relying on
Proposition 3 cannot be applied due to the fact that $r_l=N$, $l=1,\ldots,L$.

As for the initialization of $p_l$s, we choose equiprobable priors, namely, $p_l=1/L$, whereas the 
initial value of $\bM_c$ is set by generating a random 
Hermitian structure as $\bS = \bX\bX^\dag/\tr(\bX\bX^\dag)$, where $\bX$ is a $N\times K$ 
matrix whose columns are complex Gaussian random vectors with zero mean and identity covariance matrix. 
Finally, the $L$ clutter power levels are initialized as follows:
\begin{enumerate}
	\item for each range bin, compute 
	\be
	g(k) = \frac{1}{N} \bz_k^\dag \bS^{-1}\bz_k, \quad k=1,\ldots,K;
	\ee
	\item sort the above quantities in ascending order, $\tilde{g}(1)\leq\tilde{g}(2)\leq\ldots\leq\tilde{g}(K)$;
	\item the mean values of the $K/L$ subsets of the ordered powers is used to set the initial value of the 
	clutter power levels, namely, 
	\be
	\widehat{\sigma}_{c,l}^2 = \frac{L}{K}\sum_{i=(l-1)\frac{K}{L}+1}^{l\frac{K}{L}}\tilde{g}(i), \quad l=1,\ldots L.
	\ee
\end{enumerate}

As preliminary step, we analyze the requirements of the proposed procedures in terms of 
number of EM iterations. To this end, in Figure \ref{figure1},
we plot the joint log-likelihood of $\bZ$ versus the iteration number for Propositions 1 and 2. 
Specifically, the figure assumes $K_1=24$, $K_2=24$, $K_3=48$, $\sigma_{c,1}^2=20$ dB, $\sigma_{c,2}^2=30$ dB, $\sigma_{c,3}^2=40$ dB, and $t_{max}=10$, where $t_{max}$ is the iteration number for the alternating maximization procedure in Proposition 2.
\begin{figure}[htb] \centering
	\subfigure[]{\includegraphics[width=0.49\columnwidth]{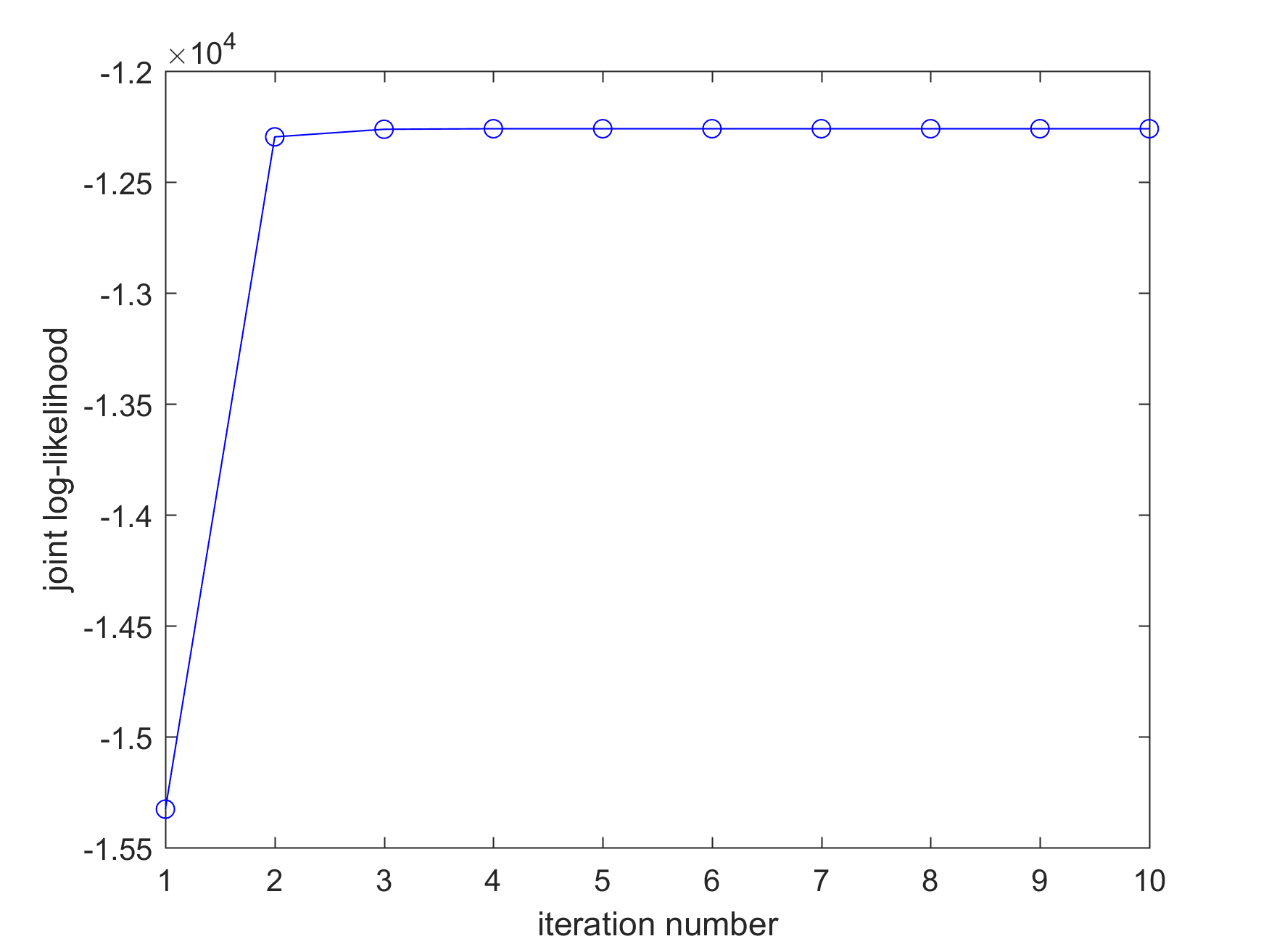}}
	\subfigure[]{\includegraphics[width=0.49\columnwidth]{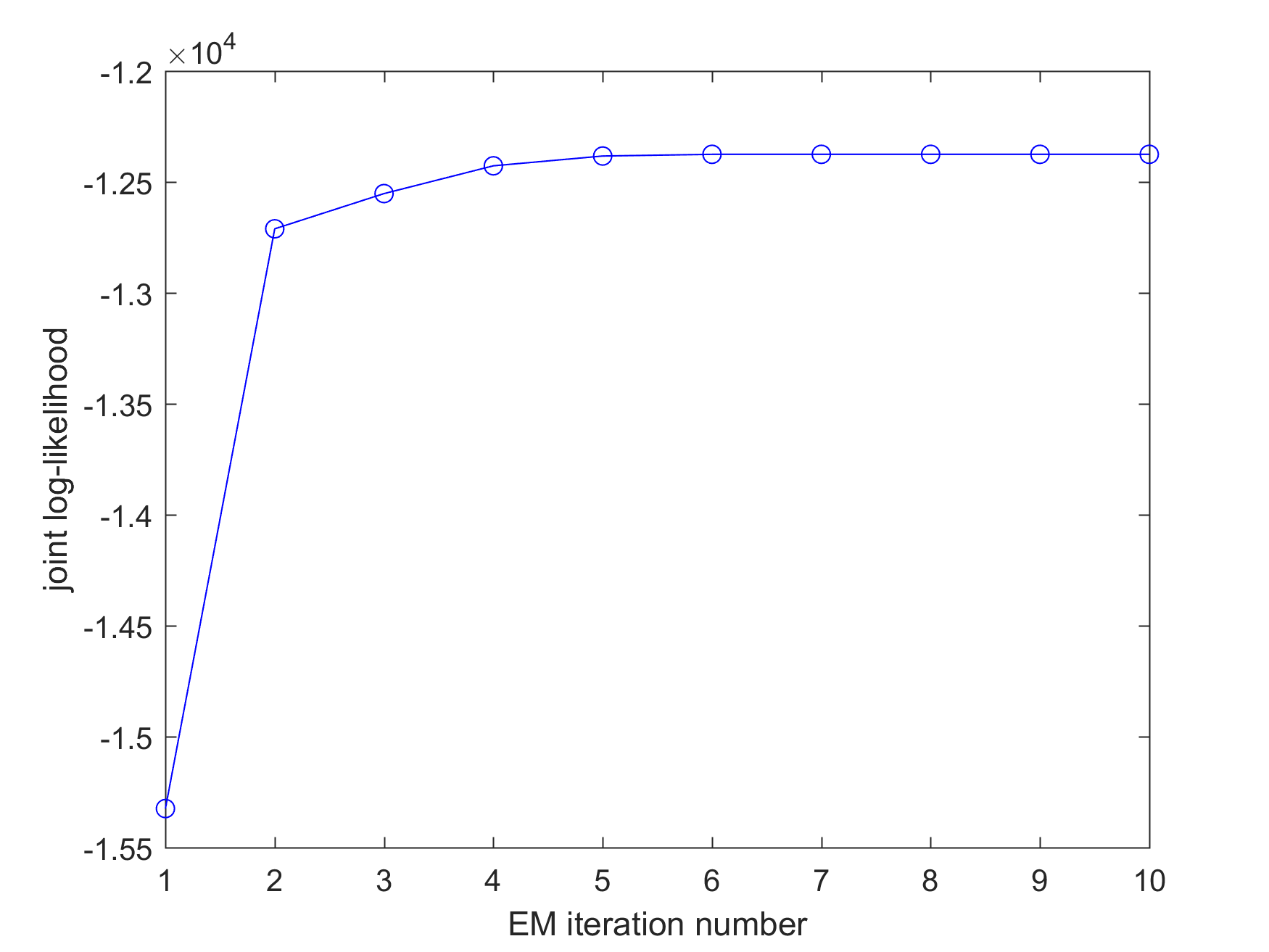}}
	\caption{Joint log-likelihood versus the iteration number of the EM procedure assuming model 1 for the structure
	of the covariance matrix: (a) Proposition 1, (b) Proposition 2.}
	\label{figure1}
\end{figure}
It turns out that, for the considered parameters, 5 iterations are sufficient to achieve convergence. 
Similar results are obtained also for other parameter setting but for brevity are not shown here. 
They point out that 10 iterations are generally sufficient for convergence. 
Therefore, in the next numerical examples, we set $h_{max}=10$. 
As for $t_{max}$, we have also analyzed its effect on the joint log-likelihood and the results 
show that $t_{max}=10$ is a proper choice.

Now, we evaluate the effect of the clutter power levels on the classification performance. 
To this end, we assume $K_1=32$, $K_2=32$, $K_3=32$, and consider 
the following three cases for the clutter power levels: (1) [20,25,30] dB; (2) [20,30,40] dB; 
(3) [20,35,50] dB. Figure \ref{figure2} shows a snapshot (to wit, a Monte Carlo outcome) for the three cases, 
where the estimated clutter classes are represented by "x" red stems, whereas the true classes by the "o" blue stems. 
The results highlight that for the considered parameters and from a qualitative point of view, 
the classification architecture based on Proposition 2 
can achieve better performance than that based on Proposition 1.
\begin{figure}[htb] \centering
	\subfigure[]{\includegraphics[width=0.49\columnwidth]{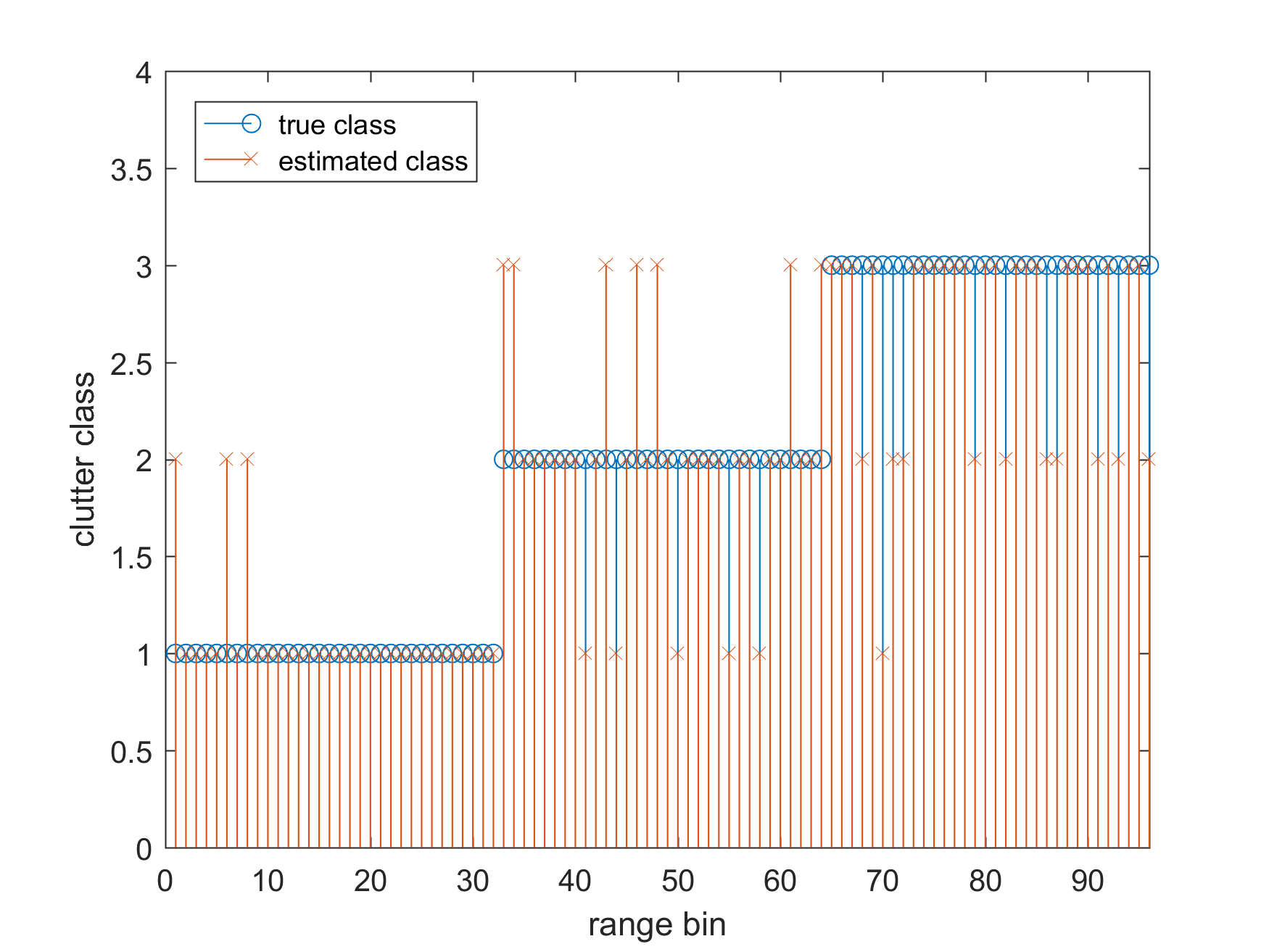}}
	\subfigure[]{\includegraphics[width=0.49\columnwidth]{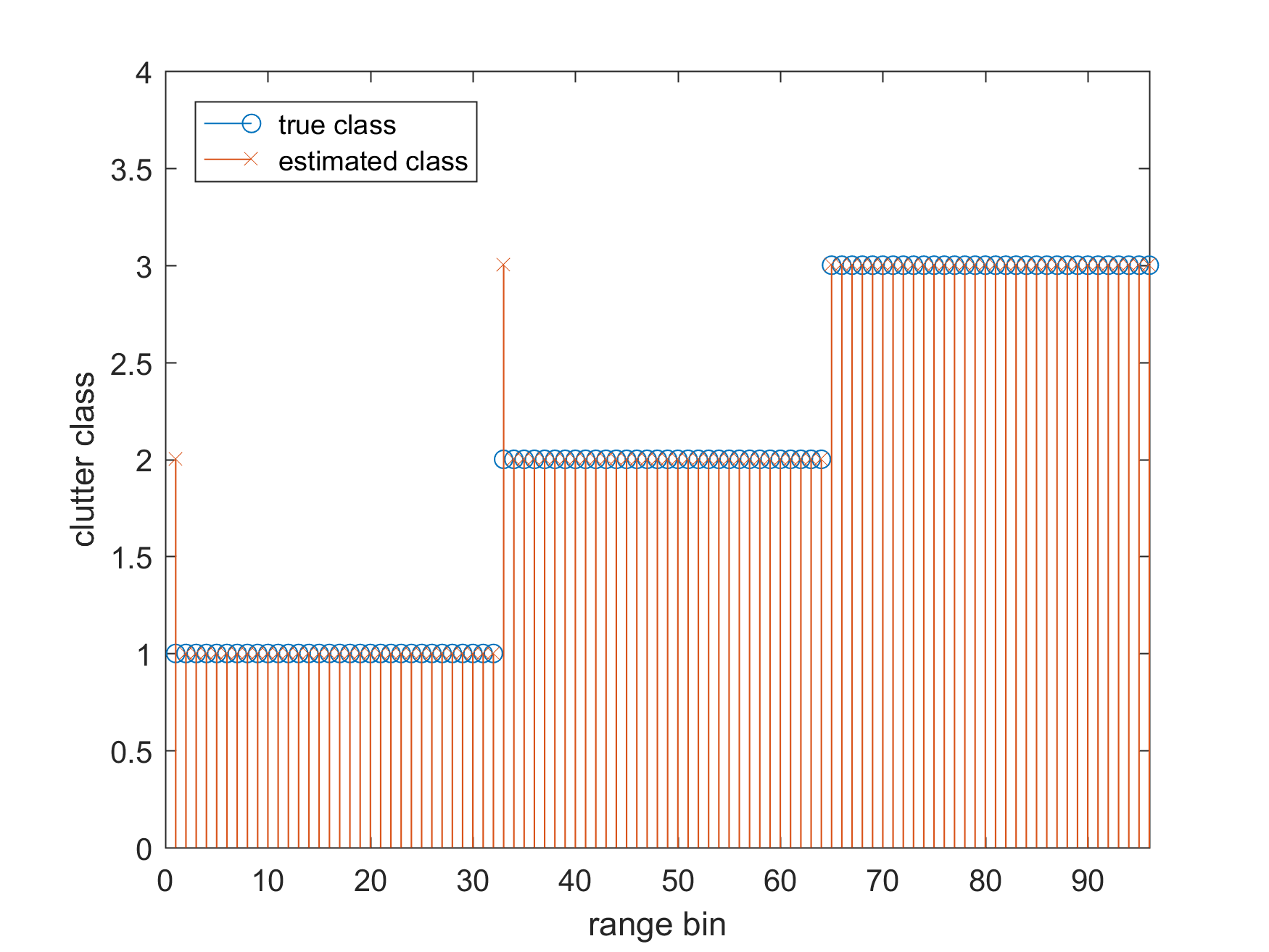}}
	\subfigure[]{\includegraphics[width=0.49\columnwidth]{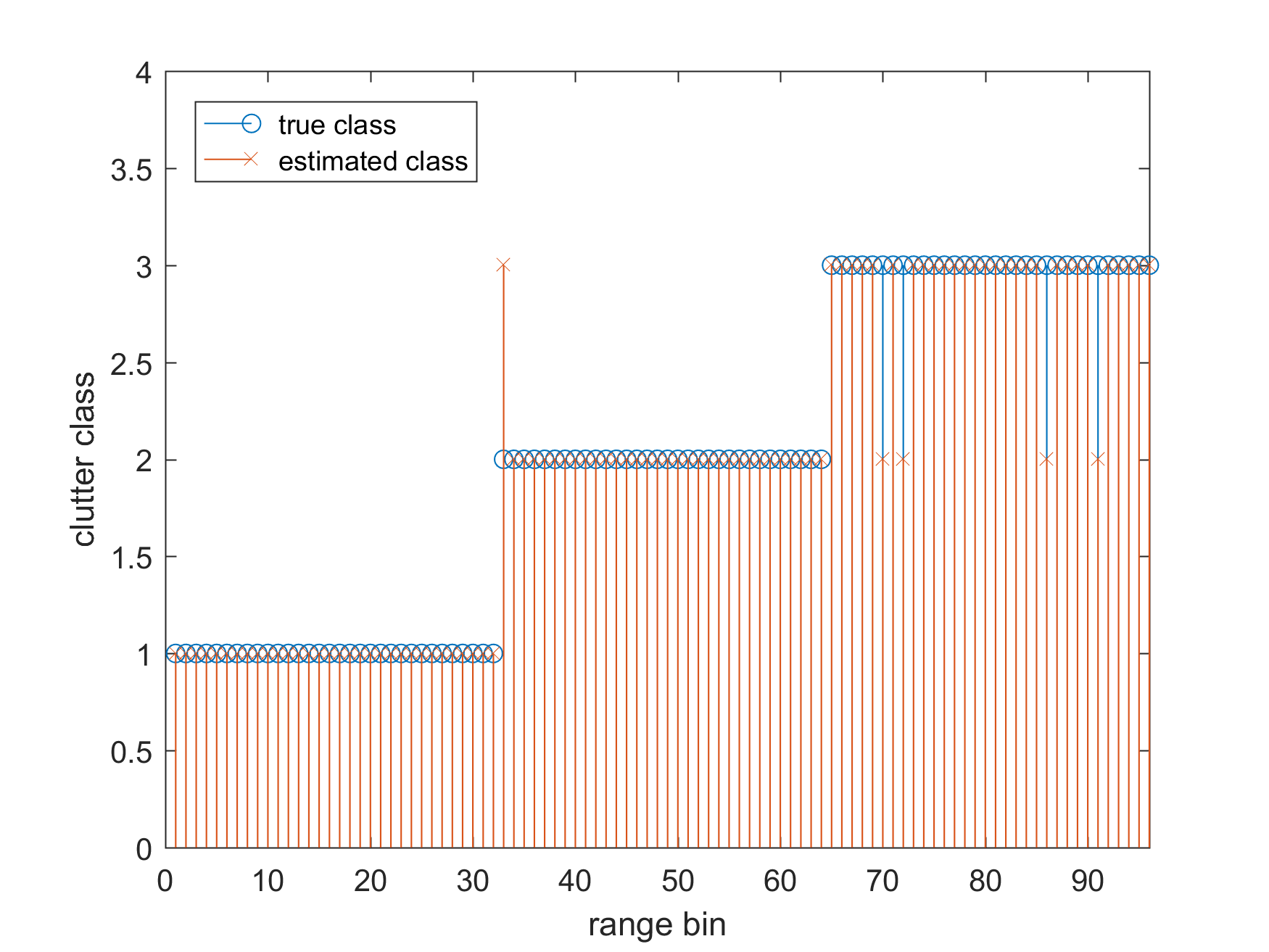}}
	\subfigure[]{\includegraphics[width=0.49\columnwidth]{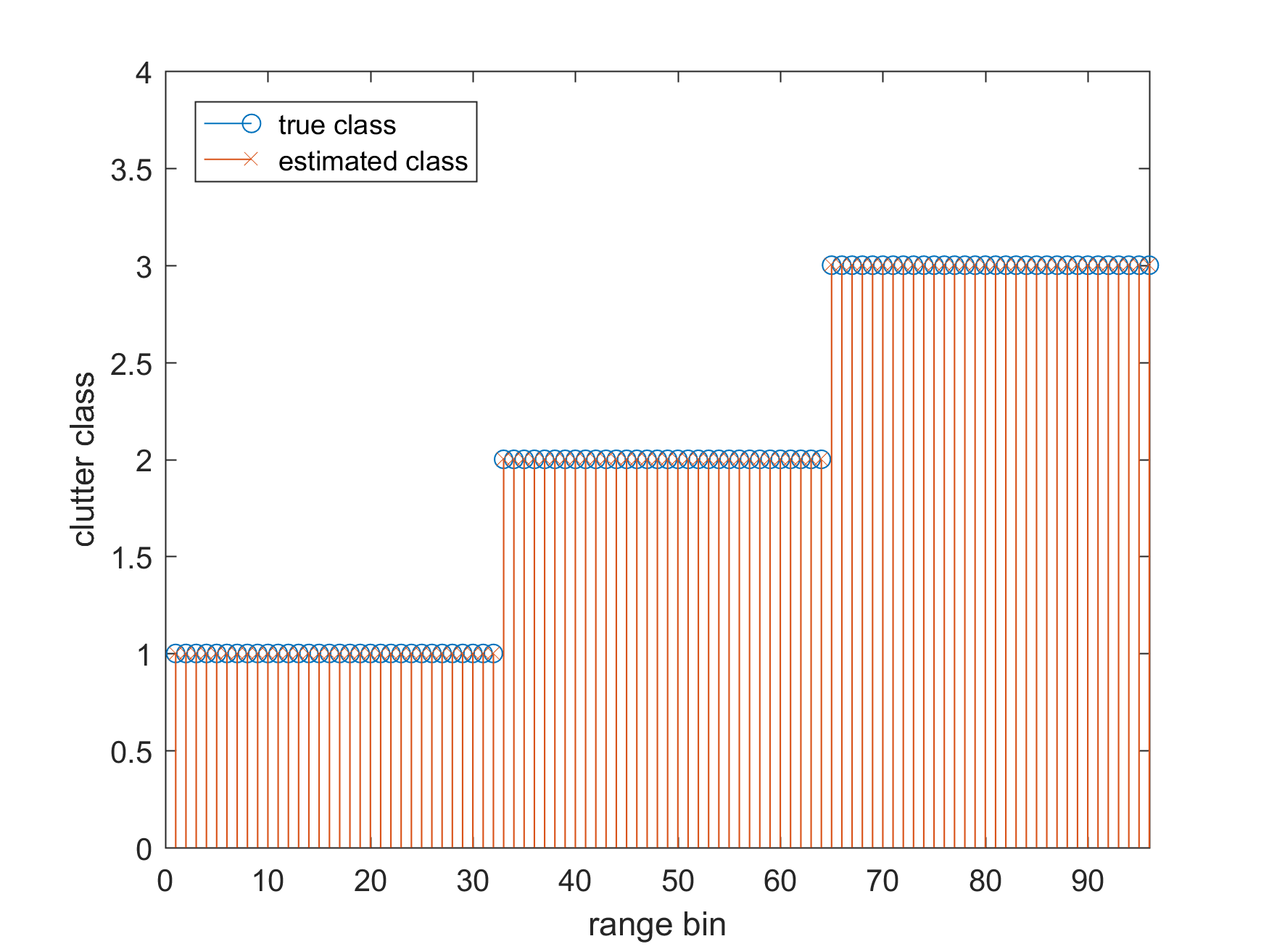}}
	\subfigure[]{\includegraphics[width=0.49\columnwidth]{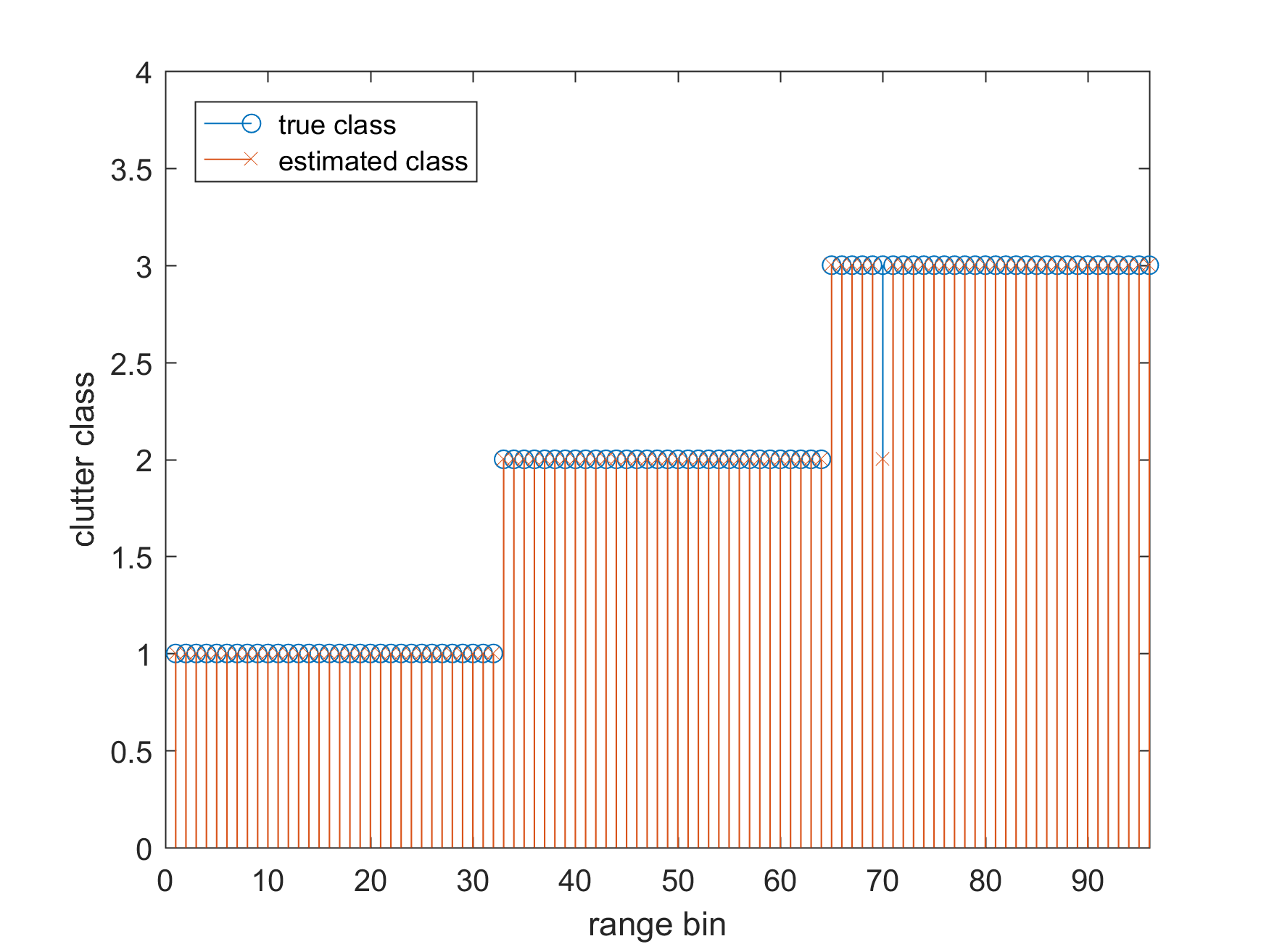}}
	\subfigure[]{\includegraphics[width=0.49\columnwidth]{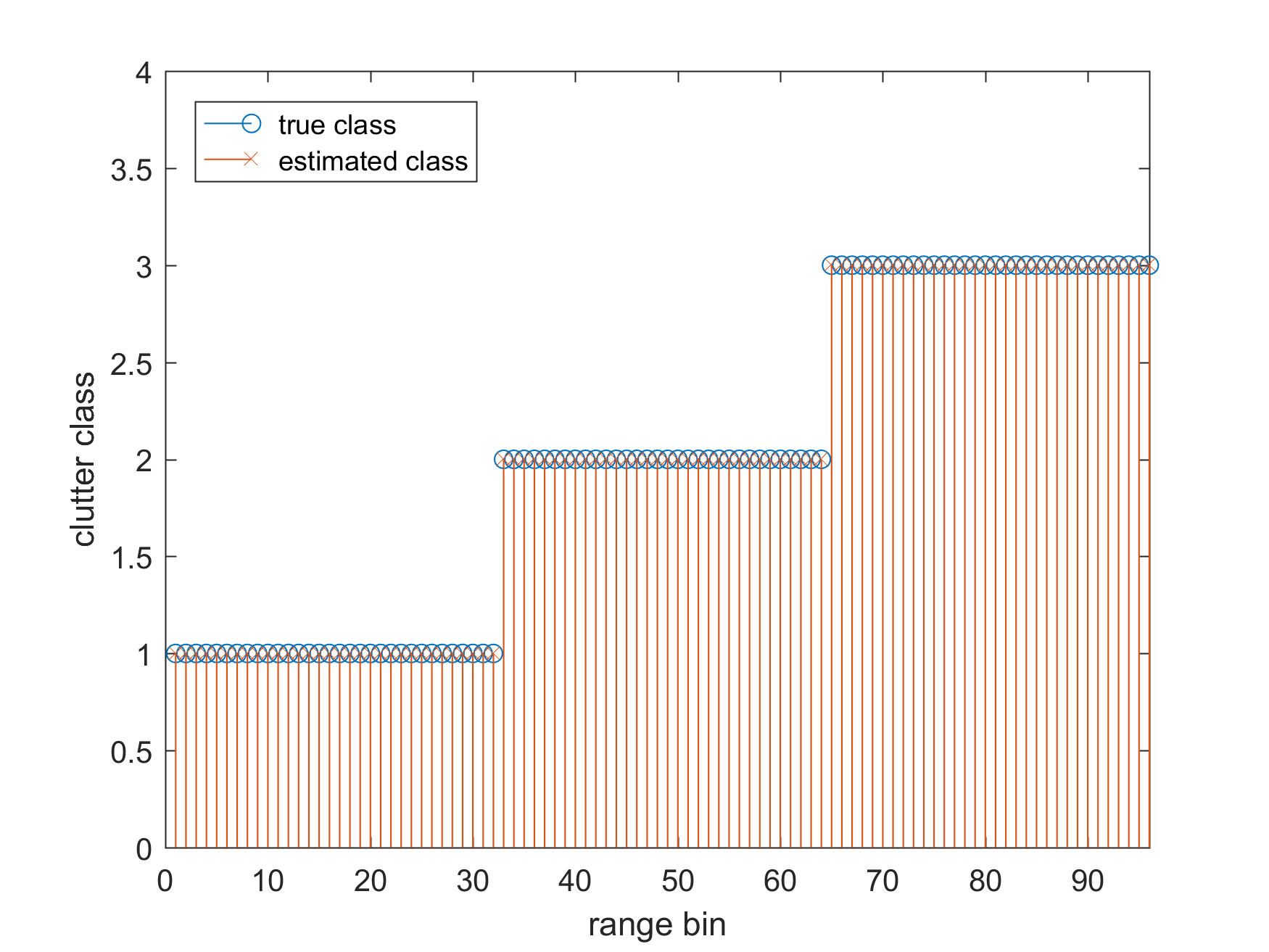}}
	\caption{Classification snapshots for different clutter power levels: (a) case (1) using Proposition 1; (b) case (1) using Proposition 2; (c) case (2) using Proposition 1; (d) case (2) using Proposition 2; (e) case (3) using Proposition 1; (f) case (3) using Proposition 2.}
	\label{figure2}
\end{figure}
A more quantitative analysis can be obtained by resorting to the RMSCE, whose values for the considered scenarios
are reported in Table \ref{table1}.
\begin{table}[htbp]
	\centering
	\caption{RMSCE for covariance model \eqref{eq_model1} and different clutter powers}
	\label{table1}
	\begin{spacing}{1.2}
	\begin{tabular}{|c|c|c|c|}
		\hline
		& case (1) & case (2) & case (3) \\
		\hline
		Proposition 1 & 19.85 & 2.87 & 0.29 \\
		\hline
		Proposition 2 & 3.10 & 0.06 & 0 \\
		\hline
		\end{tabular}
	\end{spacing}
\end{table}
These values confirm the superiority of the algorithm based on Proposition 2 with respect to that relying 
on Proposition 1, indicating that a priori information about the structure of the covariance
matrix can lead to better classification performance. In fact, the simulated covariance matrix structure is more 
compliant with Proposition 2 than Proposition 1. In addition, as expected, the larger the power separation between 
different clutter classes, the lower the error values. 

Finally, we evaluate the effect of different configurations for the $K_l$s on the classification performance 
in terms of the RMSCE assuming $\sigma_{c,l}^2=20+10l$ dB, $l=0,1,2$. 
The classification results are shown in Table \ref{table2}.
\begin{table*}[htbp]
	\centering
	\caption{RMSCE for different values of $K_l$s and covariance model \eqref{eq_model1}}
	\label{table2}
	\begin{spacing}{1.2}{\small
	\begin{tabular}{|c|c|c|c|c|c|c|c|c|c|}
		\hline
		& [20,30,46] & [30,46,20] & [46,20,30] & [24,24,48] & [24,48,24] &[48,24,24] & [18,18,60] & [18,60,18] & [60,18,18]\\
		\hline
		Prop. 1 & 21.14 & 2.52 & 7.25 & 18.63 & 4.67 & 8.04 & 39.18 & 8.50 & 28.33\\
		\hline
		Prop. 2 & 0.13 & 0.09 & 0.09 & 0.10 & 0.09 & 0.08 & 18.99 & 0.09 & 1.04\\
		\hline
	\end{tabular}}
	\end{spacing}
\end{table*}
The superiority of the classification architecture based on Proposition 2 is further validated. Moreover, it is 
worth noticing that the more challenging case for both Propositions is when the number of clutter classes with lower
clutter power is much smaller than that of the clutter class with high clutter power, namely, $K_l=[18,18,60]$. 
This behavior can be explained by the fact that the classification procedures tend to merge small classes with
low powers.

\begin{figure}
    \centering
    \includegraphics[scale=0.4]{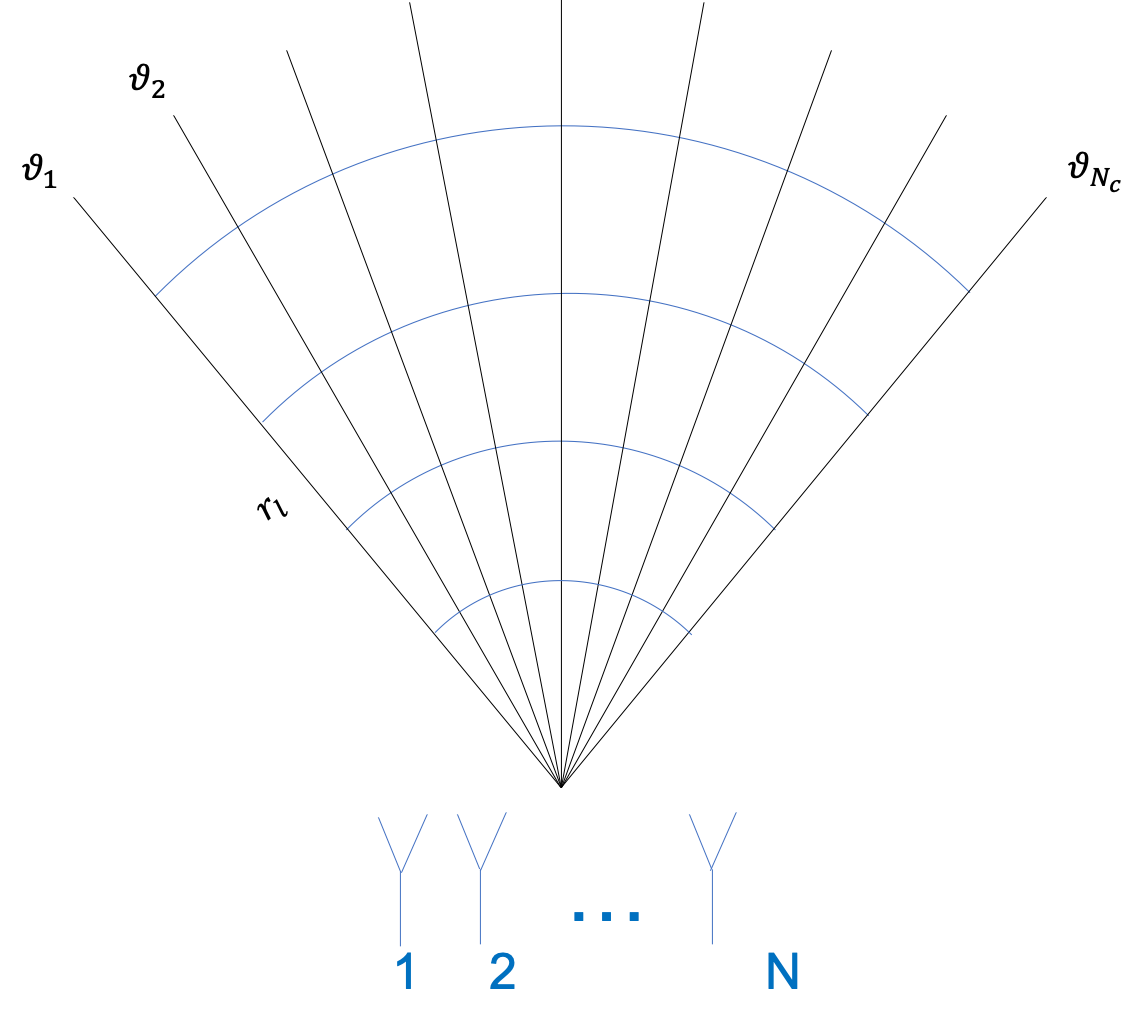}
       \caption{Angular sector under surveillance.}
    \label{fig:angular}
\end{figure}

\subsubsection{Distributed clutter plus thermal noise}
\label{Subsec:case 2}
In this subsection, we assume another clutter covariance matrix model. As shown in Figure \ref{fig:angular}, 
we consider a uniformly-spaced linear array of $N$ identical and isotropic sensors with inter-element 
distance equal to $\lambda/2$, where $\lambda$ is the wavelength corresponding to the radar carrier frequency. 
We only consider the spatial processing for simplicity and model the clutter samples as the summation of 
individual patch returns at 
distinct angles \cite{iet:/content/books/ra/sbra020e}, leading to the following covariance structure
\be
\bM_l = \sigma_{c,l}^2\sum_{\theta_i\in \Theta_l}\bv(\theta_i)\bv(\theta_i)^\dag+\sigma_{n}^2\bI ,
\label{eq_model2}
\ee
where
\begin{itemize}
	\item $\Theta_l=\{\theta_1^l,\theta_2^l,\ldots,\theta_{N_{c}^l}^l\}$ (for simplicity, we suppose that the number of the angular sectors is the same for each class, namely, $N_{c}^l=N_c$ for all $l$);
	\item $\bv(\theta_i)$ is the spatial steering vector whose expression is given by $\bv(\theta_i)
	=\frac{1}{\sqrt{N}}\left[1,e^{j\pi\sin\theta_i},\ldots,e^{j\pi(N-1)\sin\theta_i}\right]^T \in \C^{N\times 1}$.
\end{itemize}

In the following, we set $N_c=5$, the beam pointing direction to $0^\circ$, and an angular sector within 
the first null beamwidth of $14^\circ$, namely, $\Theta_l=\{-5.6^\circ,-2.8^\circ,0^\circ,2.8^\circ,5.6^\circ\}$. 

The initialization method is the same as that in the Subsection \ref{Subsec:case1}. Moreover, as to Proposition 3, 
we consider two situations, i.e., the clutter rank $\bor$ is known and $\bor$ is unknown. In the latter case, 
the GIC rule with $a=2$ is exploited to estimate $\bor$ using \eqref{eq25}.

\begin{figure}[htb] \centering
	\subfigure[]{\includegraphics[width=0.49\columnwidth]{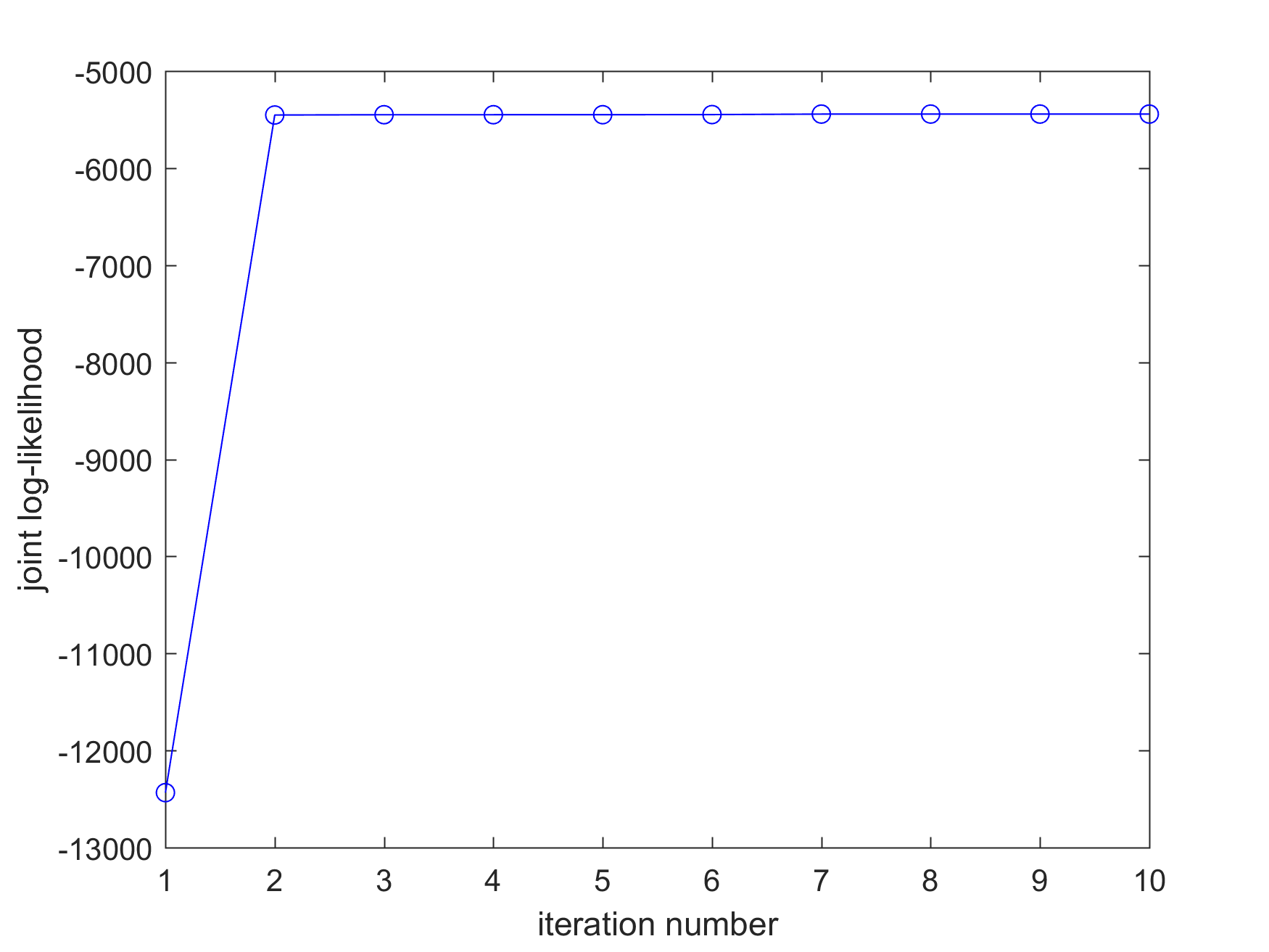}}
	\subfigure[]{\includegraphics[width=0.49\columnwidth]{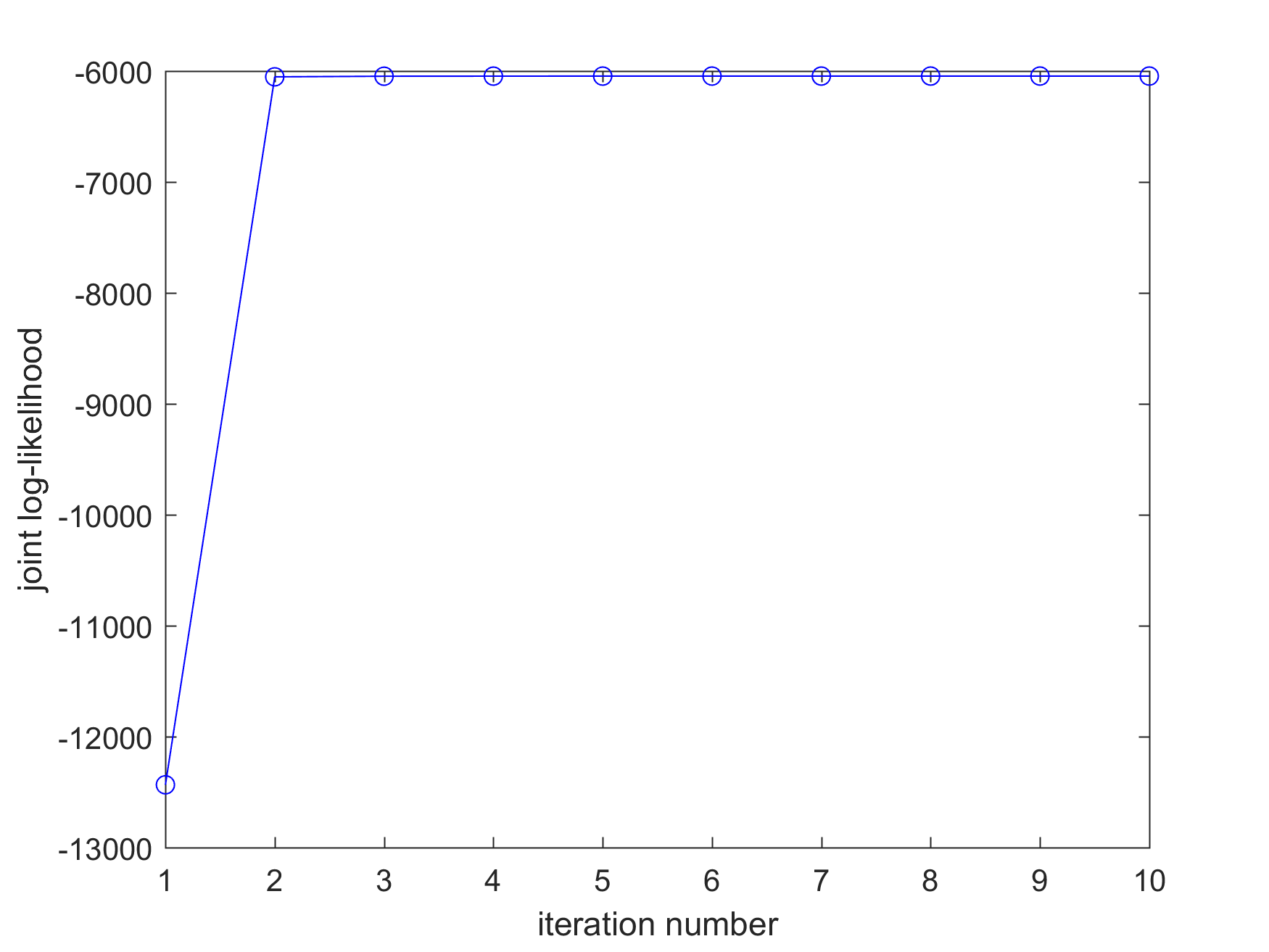}}
	\subfigure[]{\includegraphics[width=0.49\columnwidth]{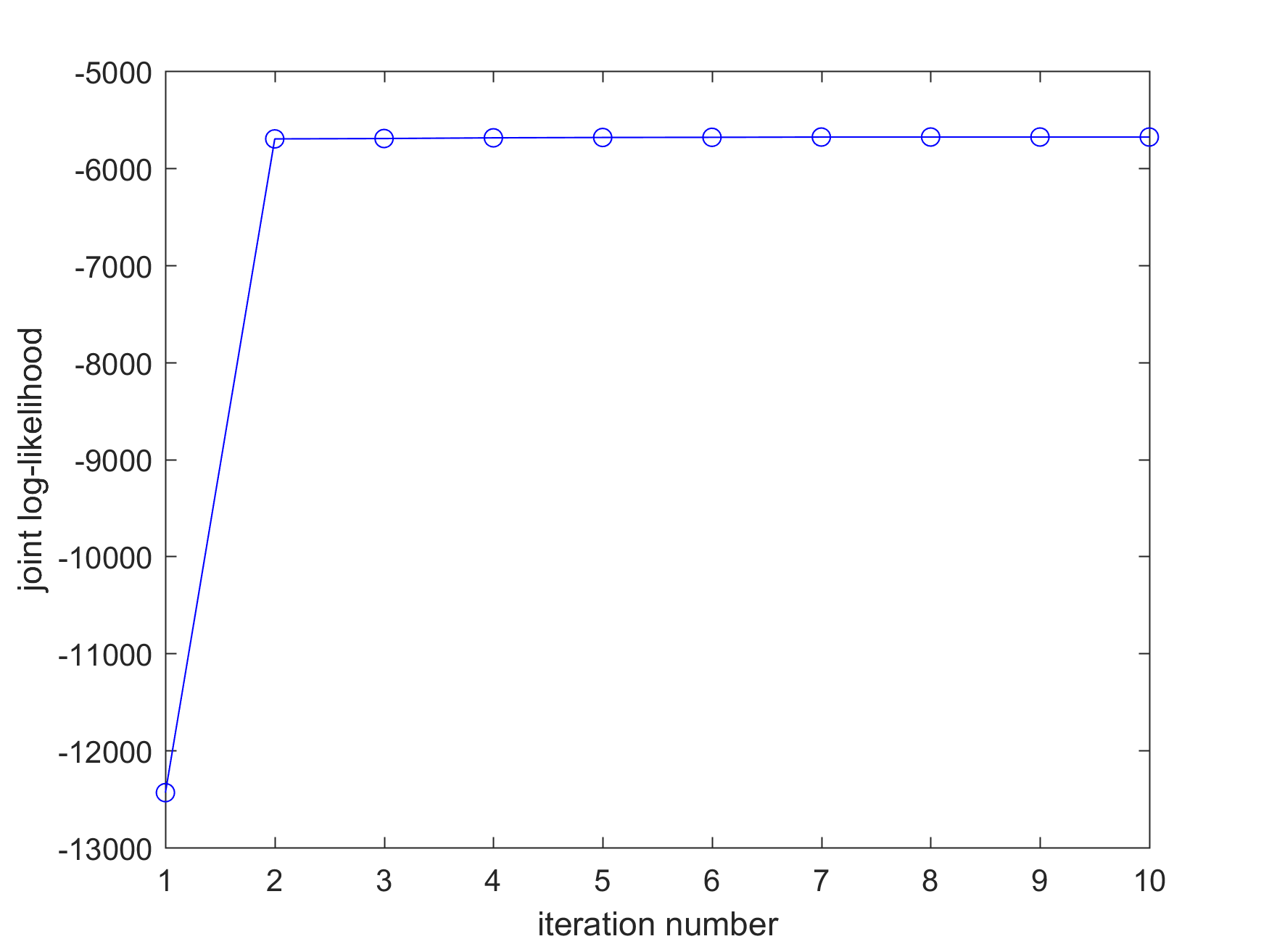}}
	\subfigure[]{\includegraphics[width=0.49\columnwidth]{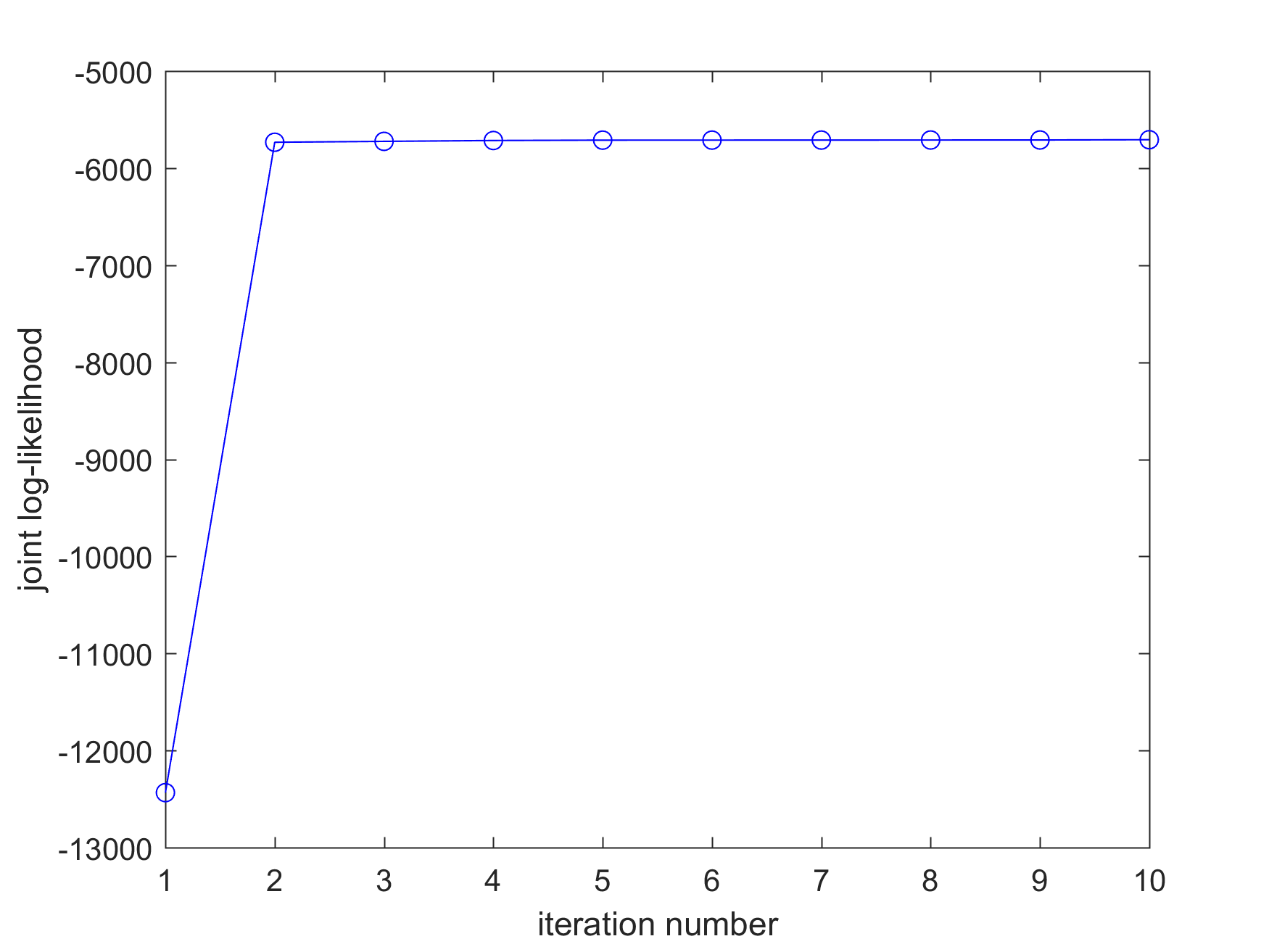}}
	\caption{Joint log-likelihood versus the iteration number of the EM procedure assuming model 2 for
	the structure of the covariance matrix: (a) Proposition 1; (b) Proposition 2; (c) Proposition 3 with 
	known $\bor$; (d) Proposition 3 with unknown $\bor$.}
	\label{figure3}
\end{figure}
Figure \ref{figure3} shows the joint log-likelihood versus the iteration number of the EM procedure for $K_1=32$,
$K_2=32$, $K_3=32$, $\sigma_{c,1}^2=20$ dB, $\sigma_{c,2}^2=30$ dB and $\sigma_{c,3}^2=40$ dB. The curves confirm 
that a few iterations are sufficient for convergence.

\begin{figure}[htb] \centering
	\subfigure[]{\includegraphics[width=0.49\columnwidth]{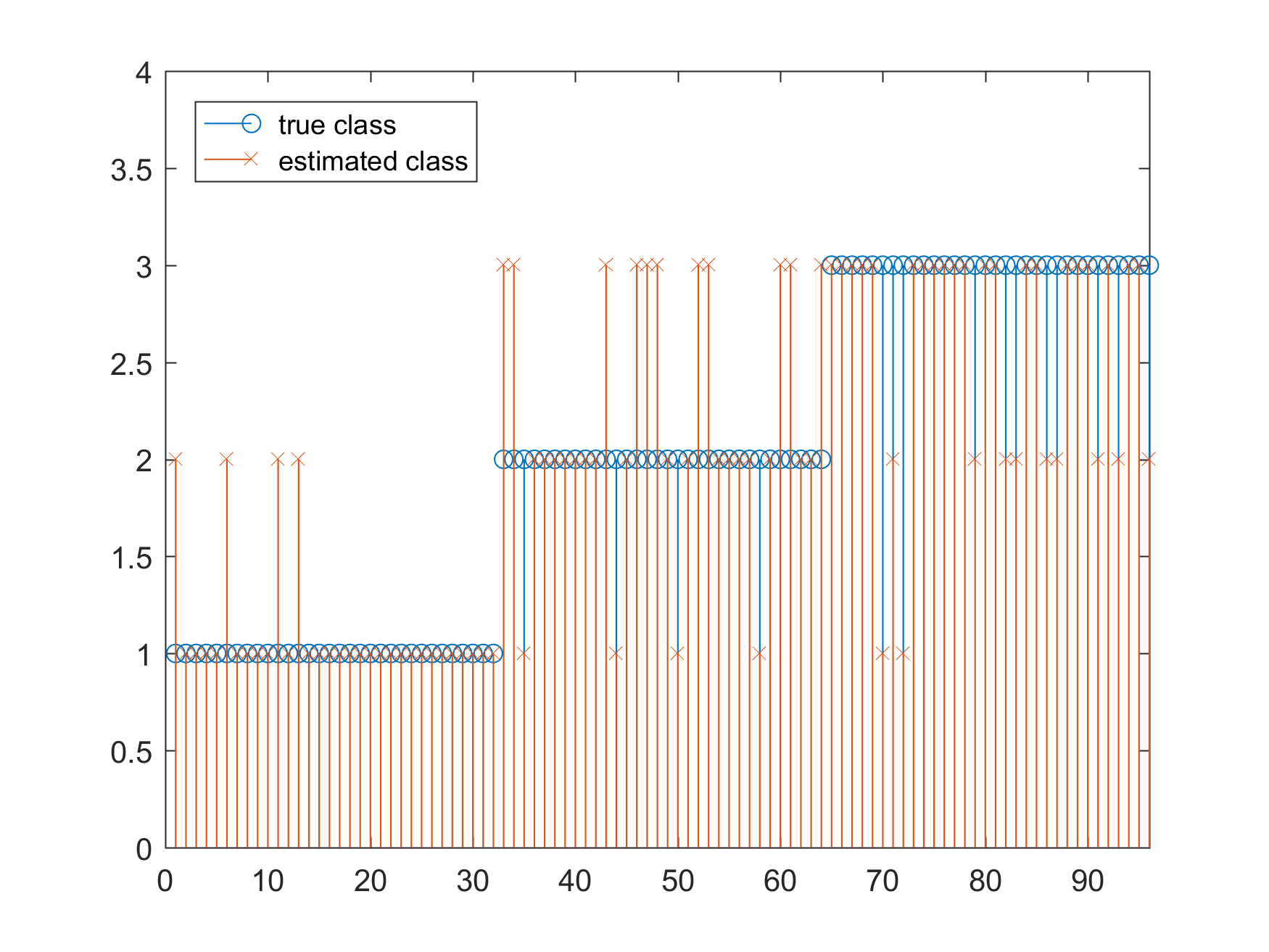}}
	\subfigure[]{\includegraphics[width=0.49\columnwidth]{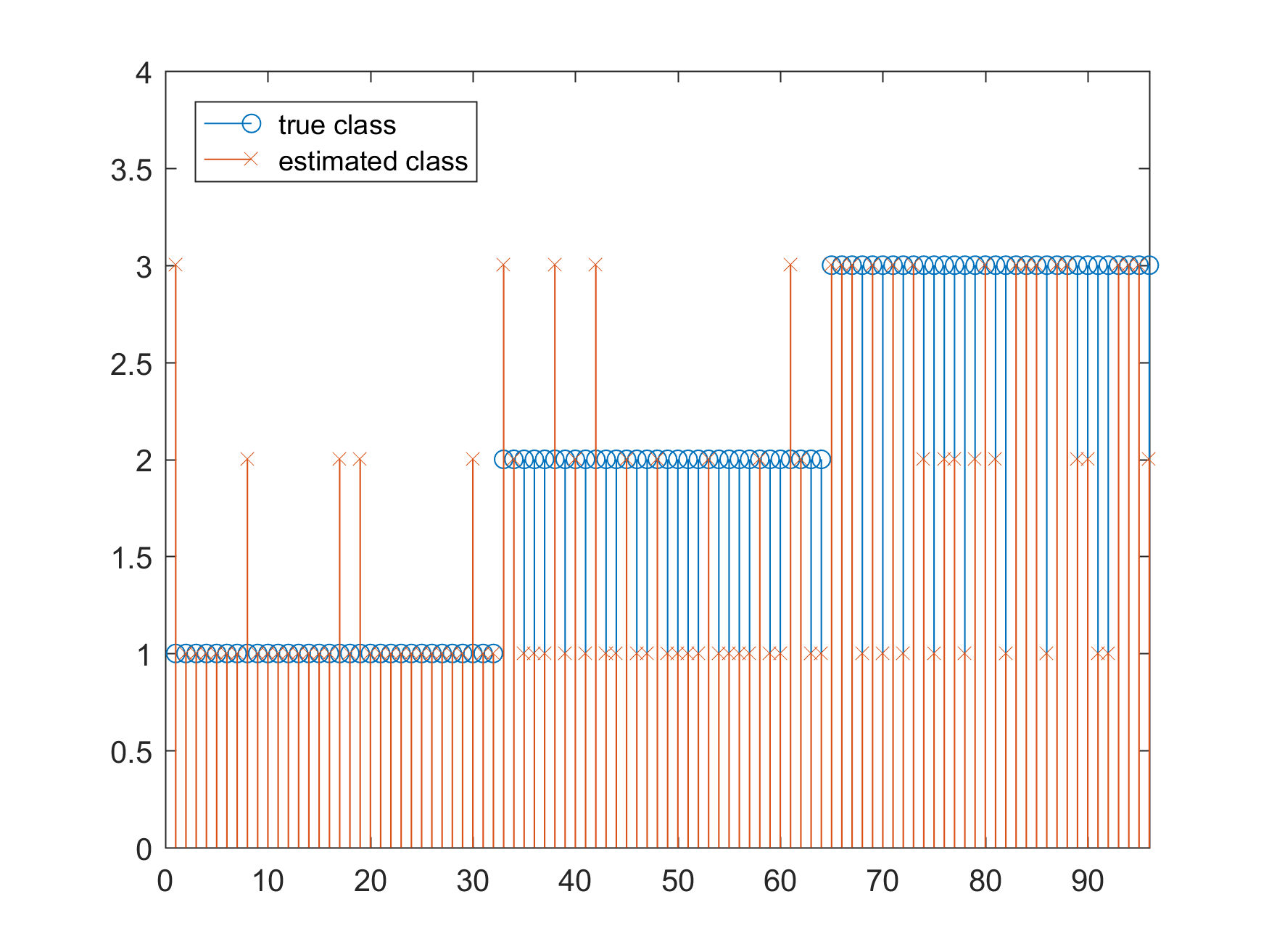}}
	\subfigure[]{\includegraphics[width=0.49\columnwidth]{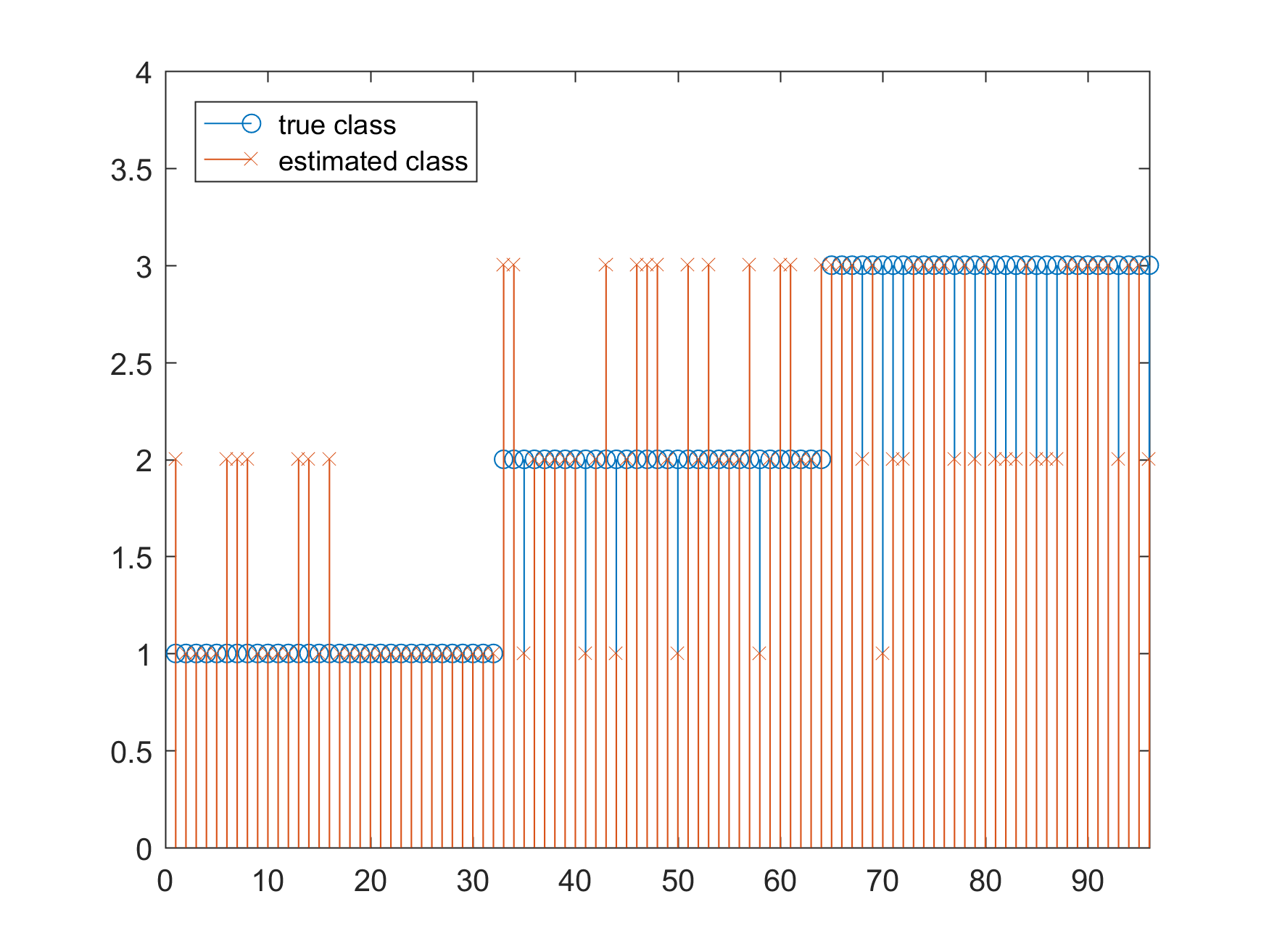}}
	\subfigure[]{\includegraphics[width=0.49\columnwidth]{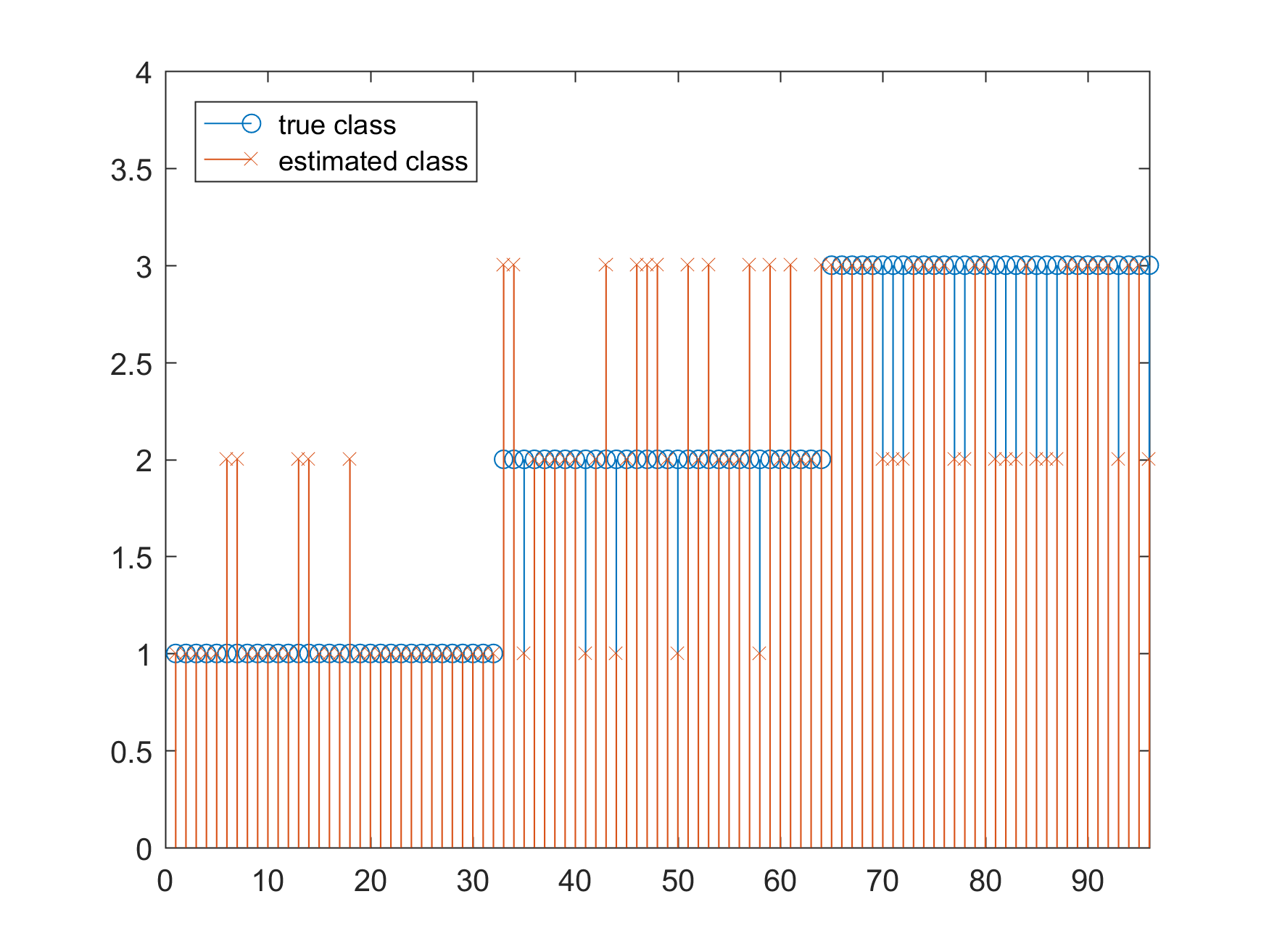}}
	\caption{Classification snapshots for $\sigma_{c,1}^2=20$ dB, $\sigma_{c,2}^2=25$ dB and $\sigma_{c,3}^2=30$ dB: (a) Proposition 1; (b) Proposition 2; (c) Proposition 3 with known $\bor$; (d) Proposition 3 with unknown $\bor$.}
	\label{figure4}
\end{figure}

\begin{figure}[htb] \centering
	\subfigure[]{\includegraphics[width=0.49\columnwidth]{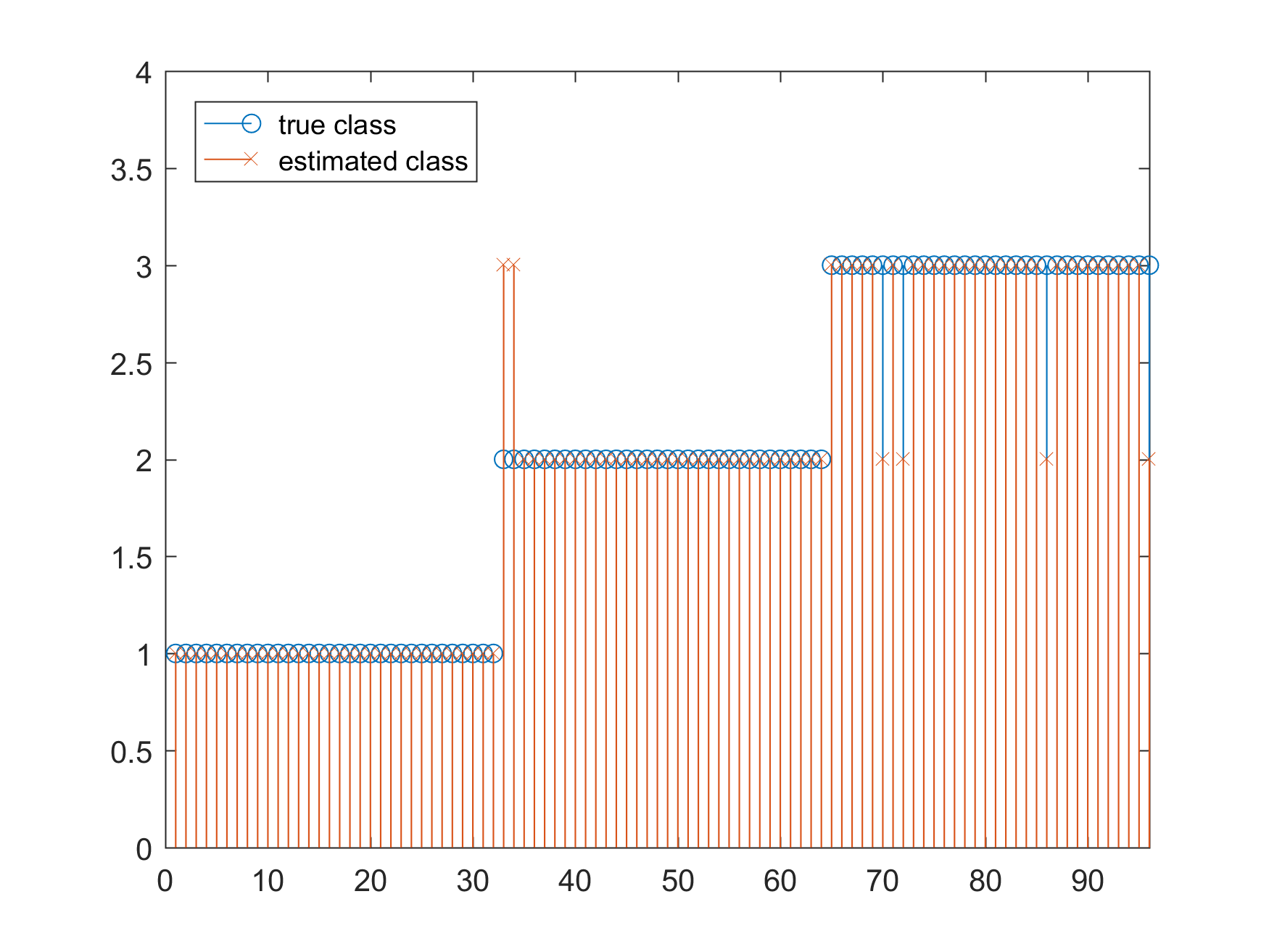}}
	\subfigure[]{\includegraphics[width=0.49\columnwidth]{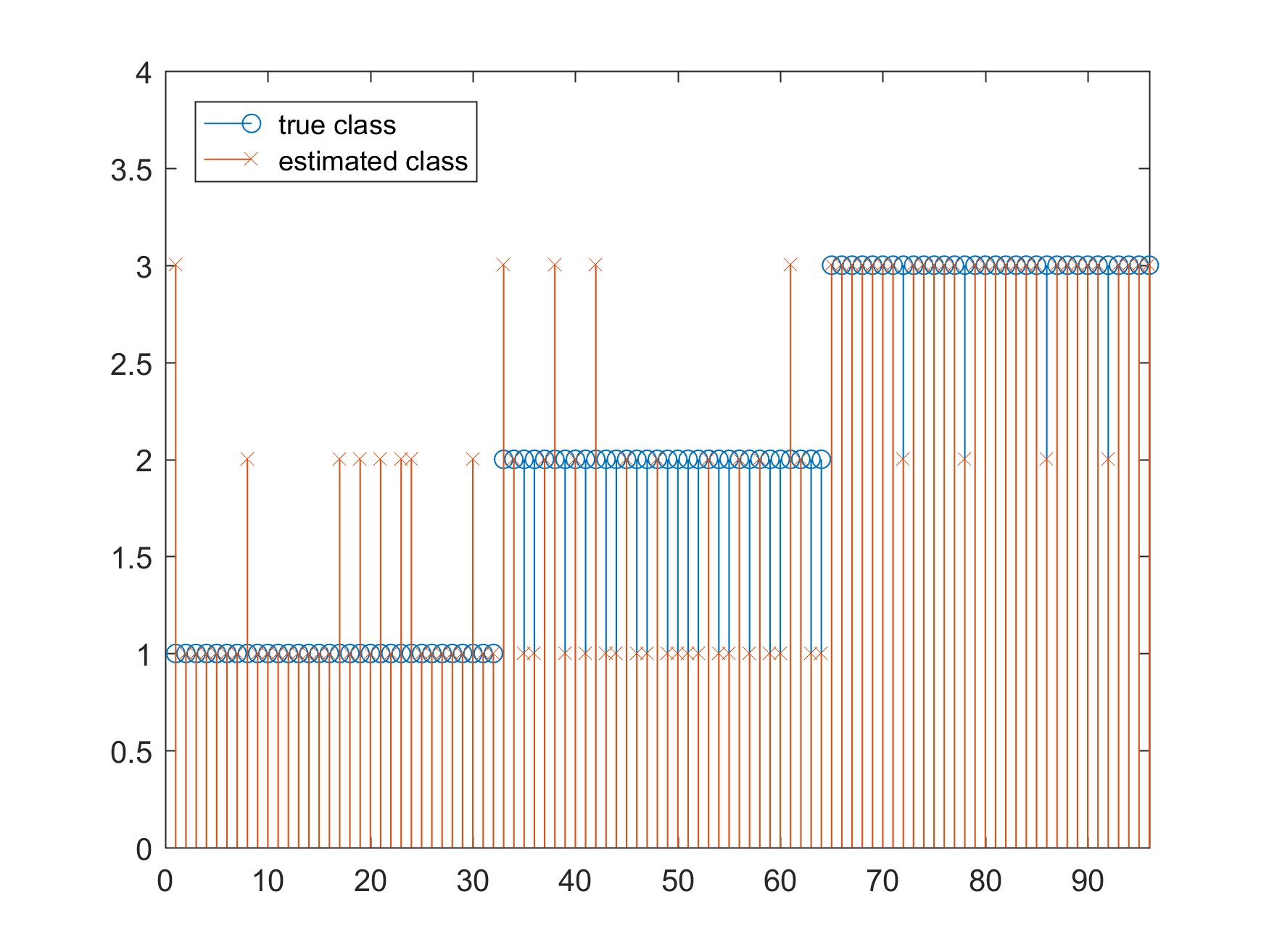}}
	\subfigure[]{\includegraphics[width=0.49\columnwidth]{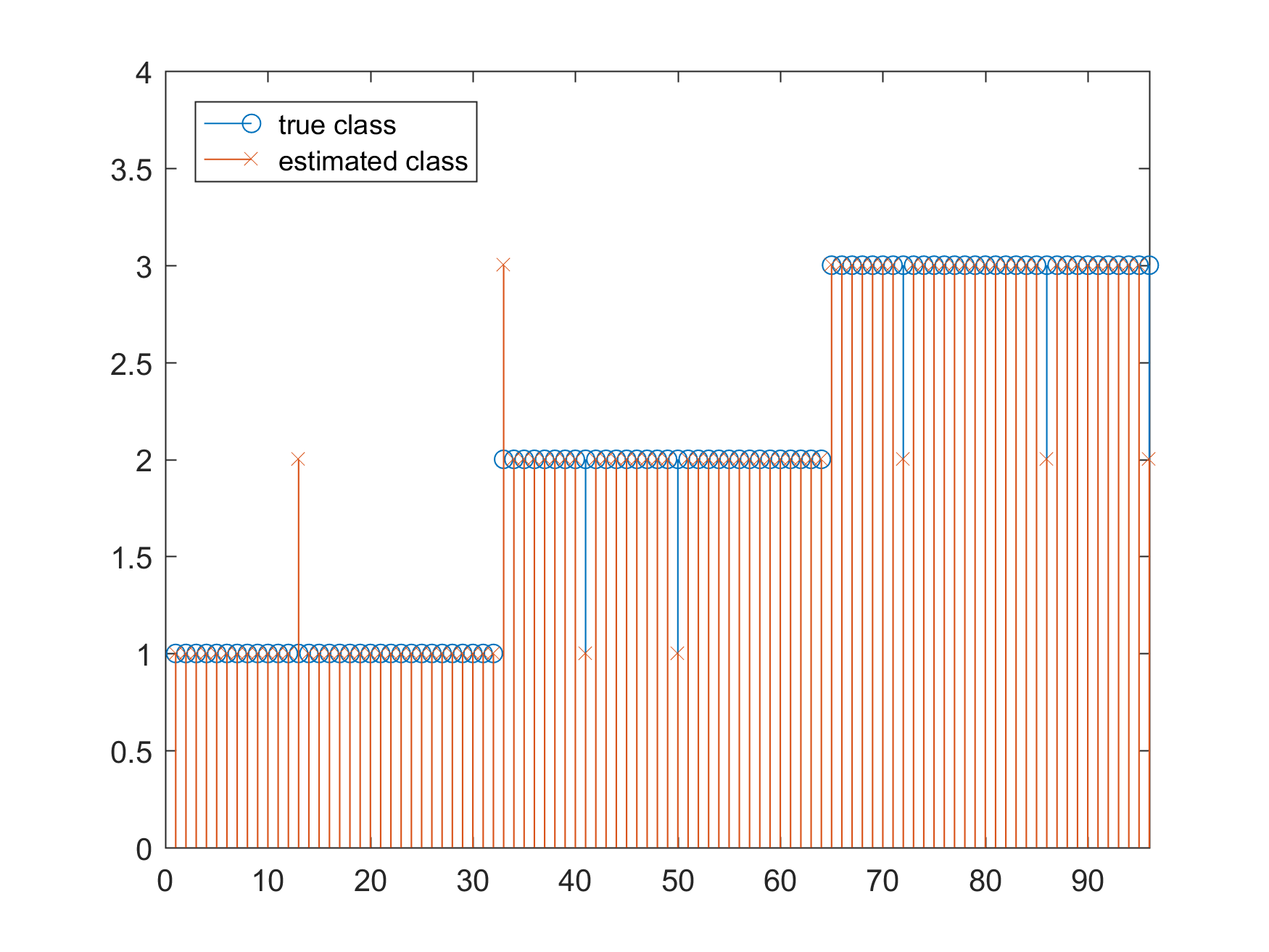}}
	\subfigure[]{\includegraphics[width=0.49\columnwidth]{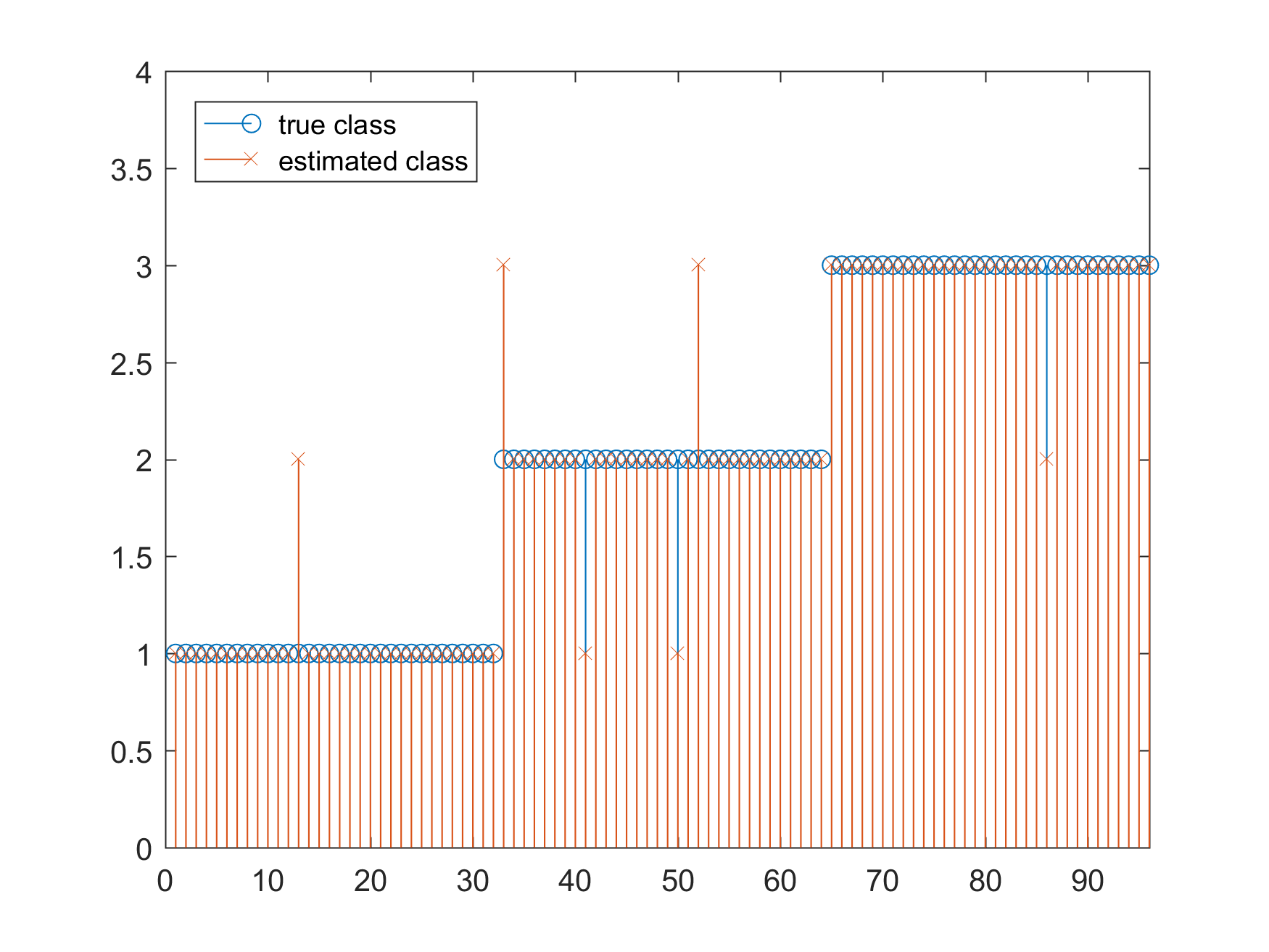}}
	\caption{Classification snapshots for $\sigma_{c,1}^2=20$ dB, $\sigma_{c,2}^2=30$ dB and $\sigma_{c,3}^2=40$ dB: (a) Proposition 1; (b) Proposition 2; (c) Proposition 3 with known $\bor$; (d) Proposition 3 with unknown $\bor$.}
	\label{figure5}
\end{figure}

\begin{figure}[htb] \centering
	\subfigure[]{\includegraphics[width=0.49\columnwidth]{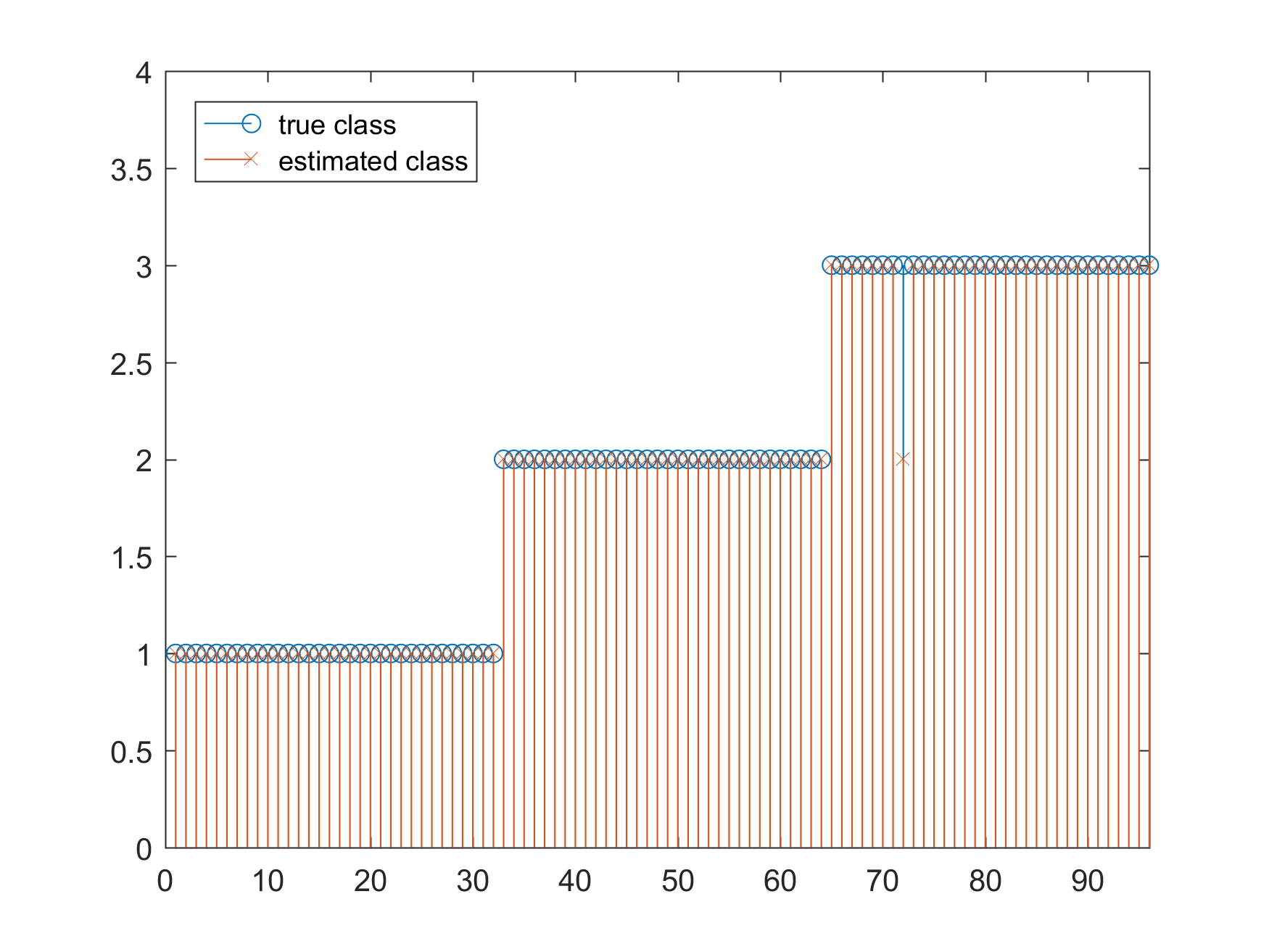}}
	\subfigure[]{\includegraphics[width=0.49\columnwidth]{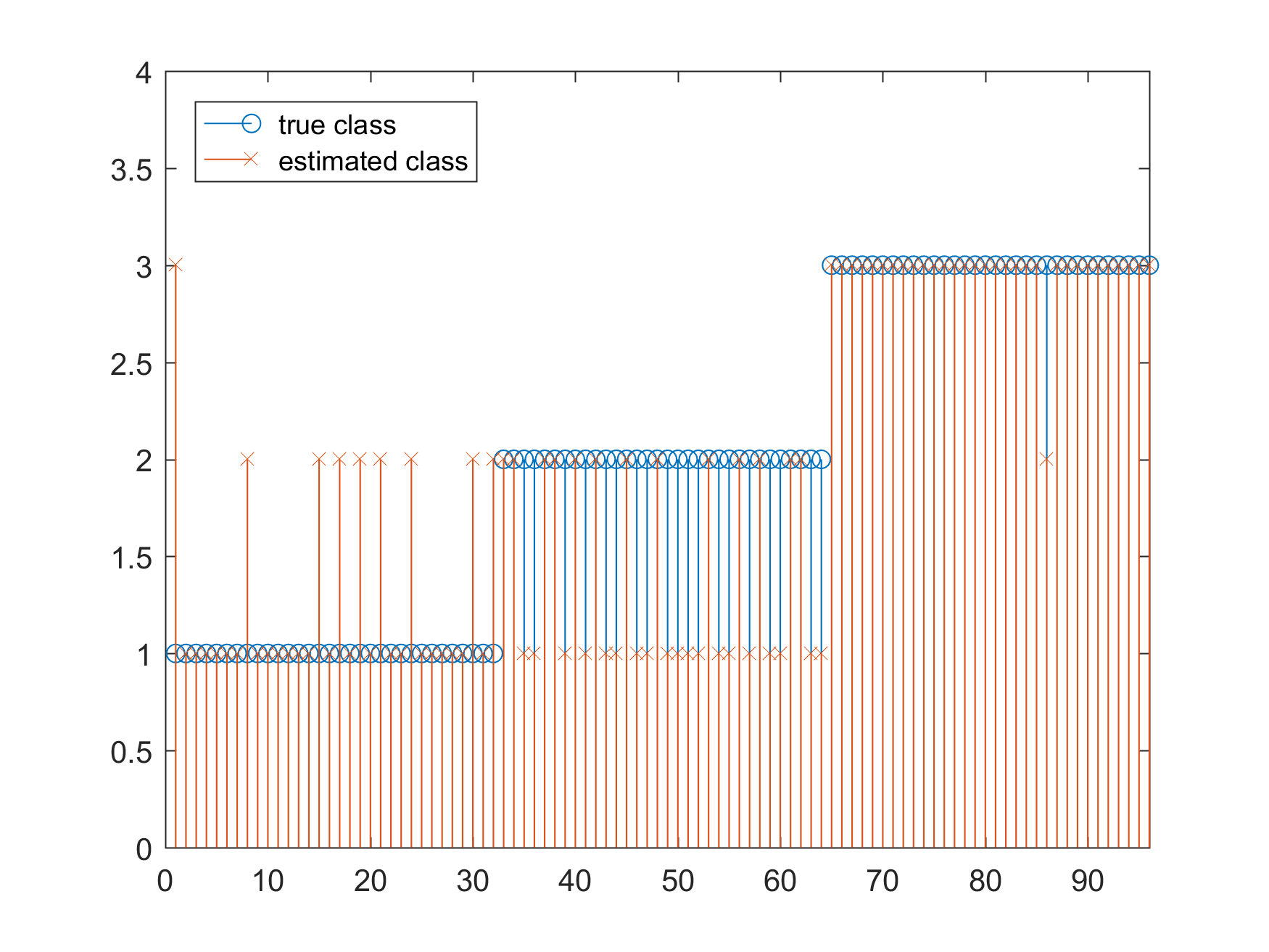}}
	\subfigure[]{\includegraphics[width=0.49\columnwidth]{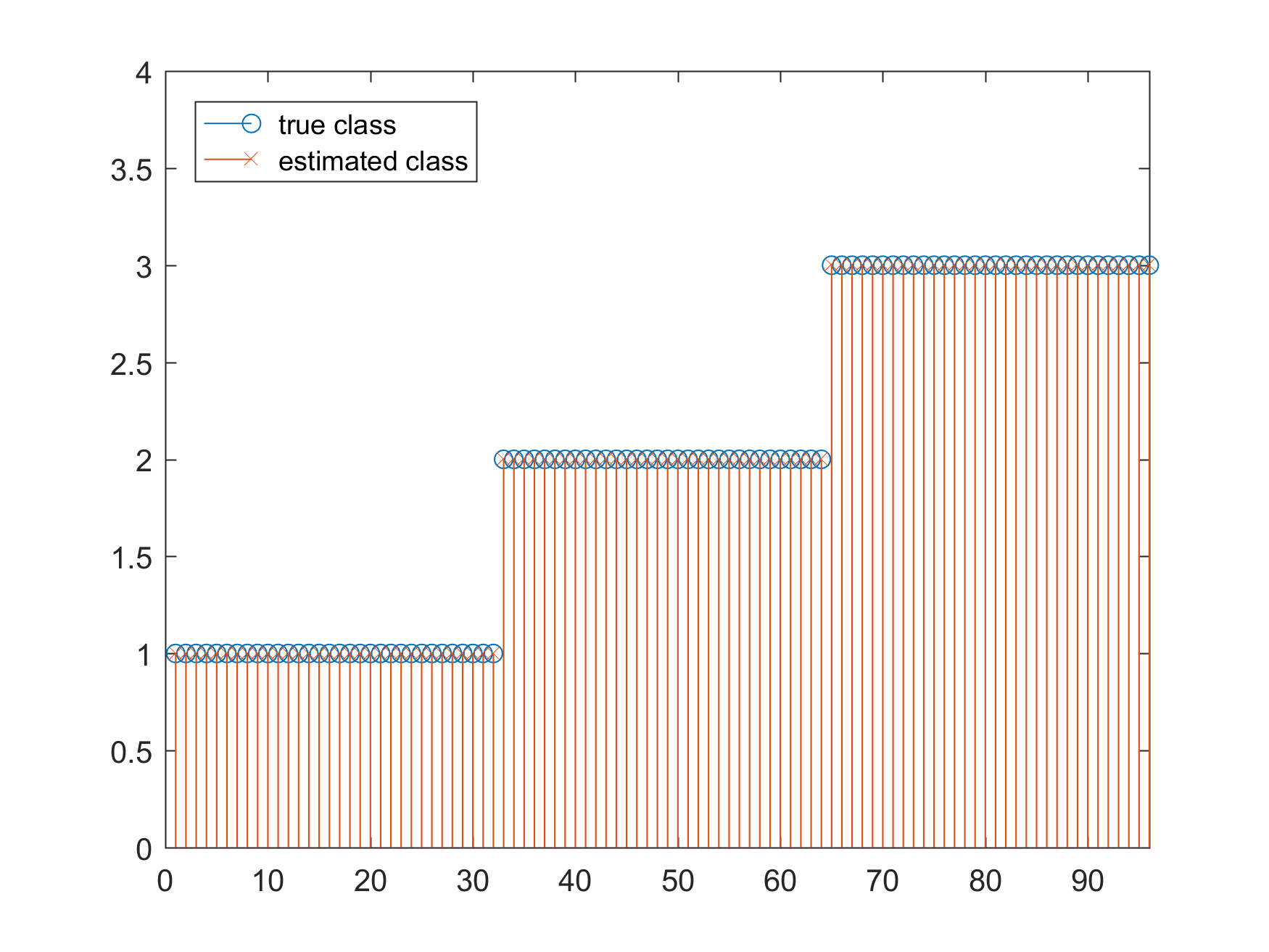}}
	\subfigure[]{\includegraphics[width=0.49\columnwidth]{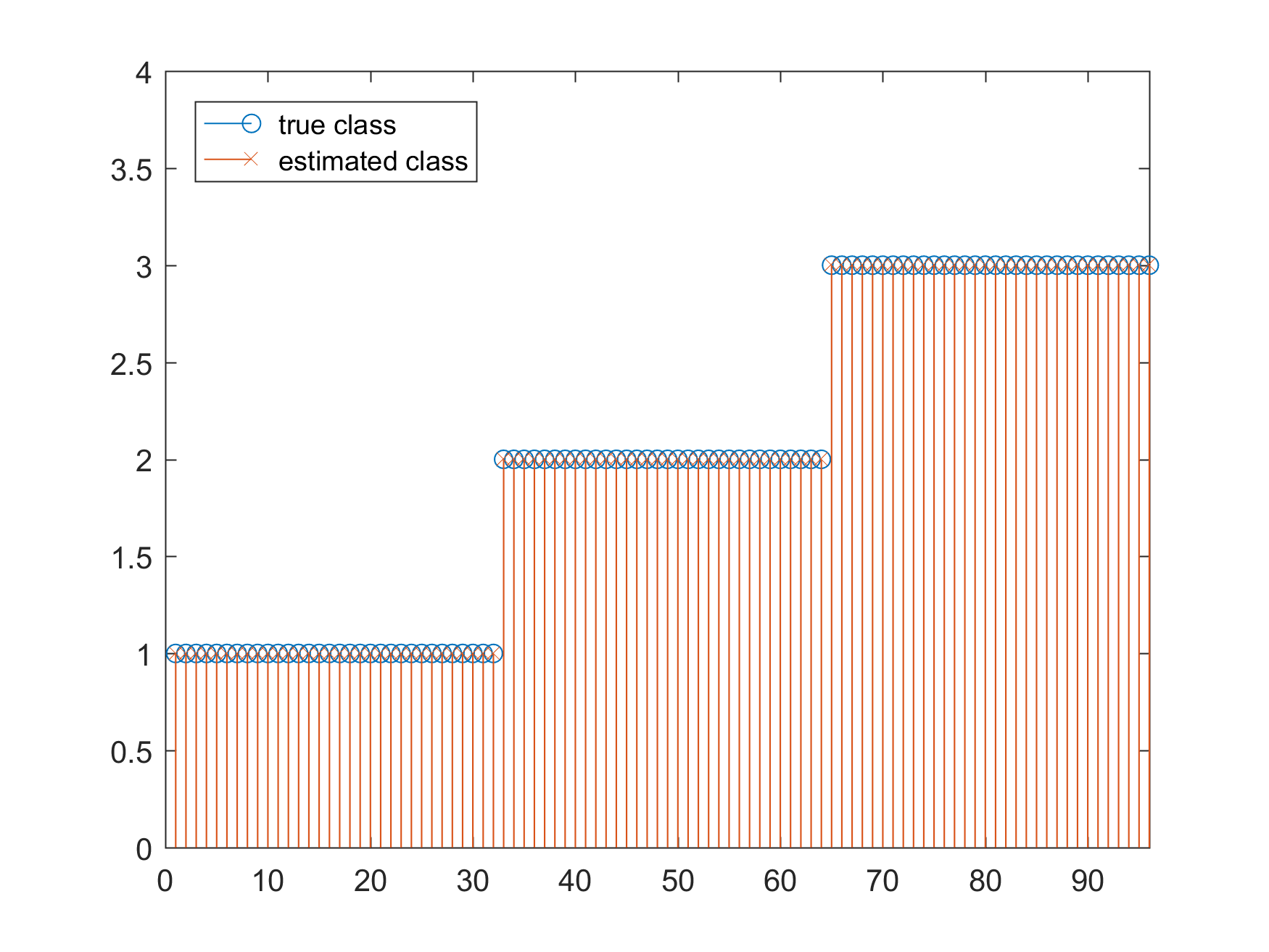}}
	\caption{Classification snapshots for $\sigma_{c,1}^2=20$ dB, $\sigma_{c,2}^2=35$ dB and $\sigma_{c,3}^2=50$ dB: (a) Proposition 1; (b) Proposition 2; (c) Proposition 3 with known $\bor$; (d) Proposition 3 with unknown $\bor$.}
	\label{figure6}
\end{figure}

\begin{table}[htbp]
	\centering
	\caption{RMSCE for different clutter powers and coavariance model \eqref{eq_model2}}
	\label{table3}
	\begin{spacing}{1.2}
		\begin{tabular}{|c|c|c|c|}
			\hline
			& case (1) & case (2) & case (3) \\
			\hline
			Proposition 1 & 26.29 & 5.87 & 1.03 \\
			\hline
			Proposition 2 & 43.89 & 33.08 & 29.21 \\
			\hline
			Proposition 3, $\bor$ known & 29.16 & 5.64 & 0.87 \\
			\hline
			Proposition 3, $\bor$ unknown & 29.94 & 5.81 & 0.97 \\
			\hline
		\end{tabular}
	\end{spacing}
\end{table}

A qualitative analysis is provided by Figures \ref{figure4}-\ref{figure6}, where Monte Carlo outcomes for 
the same clutter power configurations as in Section \ref{Subsec:case1} are shown (assuming $K_1=K_2=K_3=32$). 
The curves highlight that, for the considered parameters, the 
classification methods based on Propositions 1 and 3 share almost the same performance.
As to Proposition 2, it exhibits the worst performance. 

The RMSCE for different clutter power levels are shown in Table \ref{table3}. The obtained values confirm 
that the classification performances for Propositions 1 and 3 are very close to each other and are 
much better than that based on Proposition 2.

\begin{table*}[htbp]
	\centering
	\caption{RMSCE for different values of $K_l$ and coavariance model \eqref{eq_model2}}
	\label{table4}
	\begin{spacing}{1.2}{\small
		\begin{tabular}{|c|c|c|c|c|c|c|c|c|c|}
			\hline
			& [20,30,46] & [30,46,20] & [46,20,30] & [24,24,48] & [24,48,24] &[48,24,24] & [18,18,60] & [18,60,18] & [60,18,18]\\
			\hline
			Prop. 1 & 20.48 & 8.32 & 10.29 & 18.37 & 8.29 & 15.70 & 37.36 & 13.12 & 38.32\\
			\hline
			Prop. 2 & 37.66 & 29.19 & 24.99 & 35.84 & 27.90 & 25.97 & 44.59 & 22.11 & 35.72\\
			\hline
			Prop. 3, & 19.44 & 6.74 & 7.99 & 17.23 & 7.54 & 11.76 & 35.66 & 11.24 & 37.73  \\ 
			$\bor$ known &  &  &  &  &  &  &  & &\\
			\hline
			Prop. 3, & 18.55 & 6.83 & 7.29 & 16.29 & 7.58 & 11.00 & 35.28 & 10.84 & 38.28 \\ 
			$\bor$ unknown &  &  &  &  &  &  &  & &\\
			\hline
		\end{tabular}}
	\end{spacing}
\end{table*}

Table \ref{table4} contains the RMSCE values for different $K_l$ configurations. 
Inspection of the table indicates that the classification algorithm based on Proposition 3
can guarantee better performance than that 
based on Proposition 1. In addition, the knowledge of $\bor$ does not significantly affects the resulting
performance. Finally, Proposition 2 continues to return the highest error values.

\subsection{Real data}

In this section, we assess the performance analysis on real L-band land clutter data, recorded in 1985 using the MIT Lincoln Laboratory Phase One radar at the Katahdin Hill site, MIT Lincoln Laboratory. 
We consider datasets contained in the files $H067037.2$ and $H067038.3$, which 
are composed of $30720$ temporal returns from $76$ range cells with VV and HH-polarization, respectively.
More details about this dataset can be found in \cite{766939,937467,1413757} and references therein.


The 3-D clutter intensity field, from the Phase One file $H067037.2$, 
is plotted in Figure \ref{fig:matrixH067037_2}. It is evident the presence of two regions with different power levels (region 1 from cell 1 to cell 48 and region 2 from cell 49 to cell 76). 
This behavior, already observed in \cite{1413757}, is due to the fact that data were measured from range cells containing agricultural fields in contrast to windblown vegetation. Other five major terrain categories, distributed within the two major regions, are also evident, as indicated in figure.
%
The 3-D normalized intensity plot relative to the $H067038.3$ data file is reported in Figure \ref{fig:matrixH067038_3}).  Here, three major areas with different power levels can be identified.

These data are fed to the proposed algorithms and the used parameters are:
\begin{itemize}
\item $N=8$;
\item $K=75$;
\item $L=3$ or $5$;
\item a maximum number iterations of 10 (for both EM and alternating procedure).
\end{itemize}
Classification results, relative to the $H067037.2$ dataset, are reported in Figures \ref{fig:realdata_2001_1_L3} and \ref{fig:realdata_2001_1_L5}, for a number of classes of three and five, respectively. 
Data are characterized by small temporal variations of the power (variations in time on a given range cell, or on few cells) due to the inherent characteristic of the observed scene. Thus, the estimated classes are compared with power levels averaged over 100 temporal samples near the selected temporal $N$ samples.
The inspection of the figure points out that estimated classes follow the power profile 
for both $L=3$ and $L=5$. 
For this dataset, five classes allow to distinguish between all the five terrains indicated in Figure \ref{fig:matrixH067037_2}.

The classification results for dataset $H067038.3$ are shown in Figures \ref{fig:realdata_23001_86_L3} and \ref{fig:realdata_23001_86_L5}, respectively, and confirm what observed in the previous figures.

\begin{figure}
    \centering
    \includegraphics[scale=0.3]{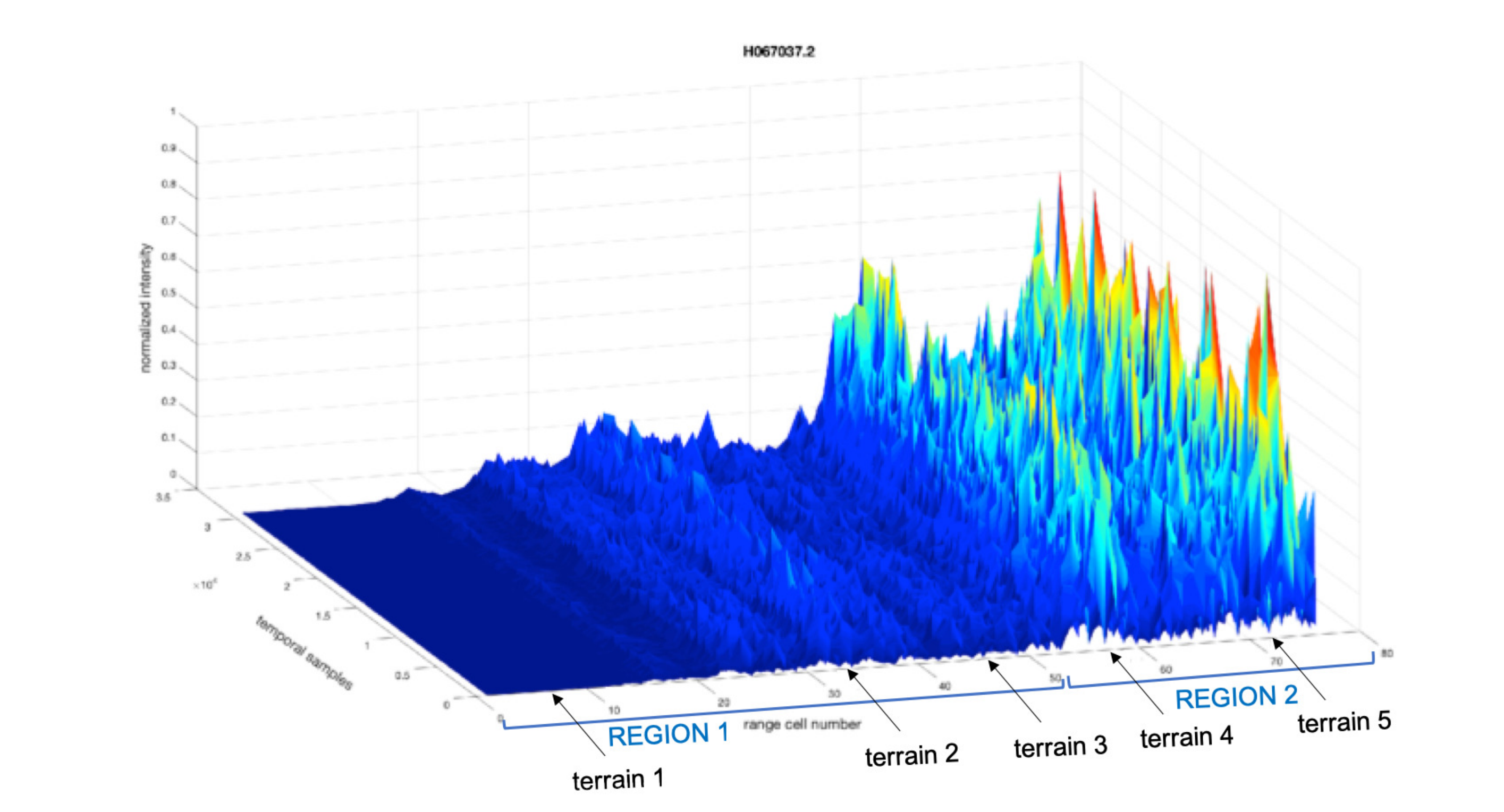}
       \caption{3-D normalized intensity field of clutter returns (H067037.2 dataset).}
    \label{fig:matrixH067037_2}
\end{figure}

\begin{figure}
    \centering
    \includegraphics[scale=0.17]{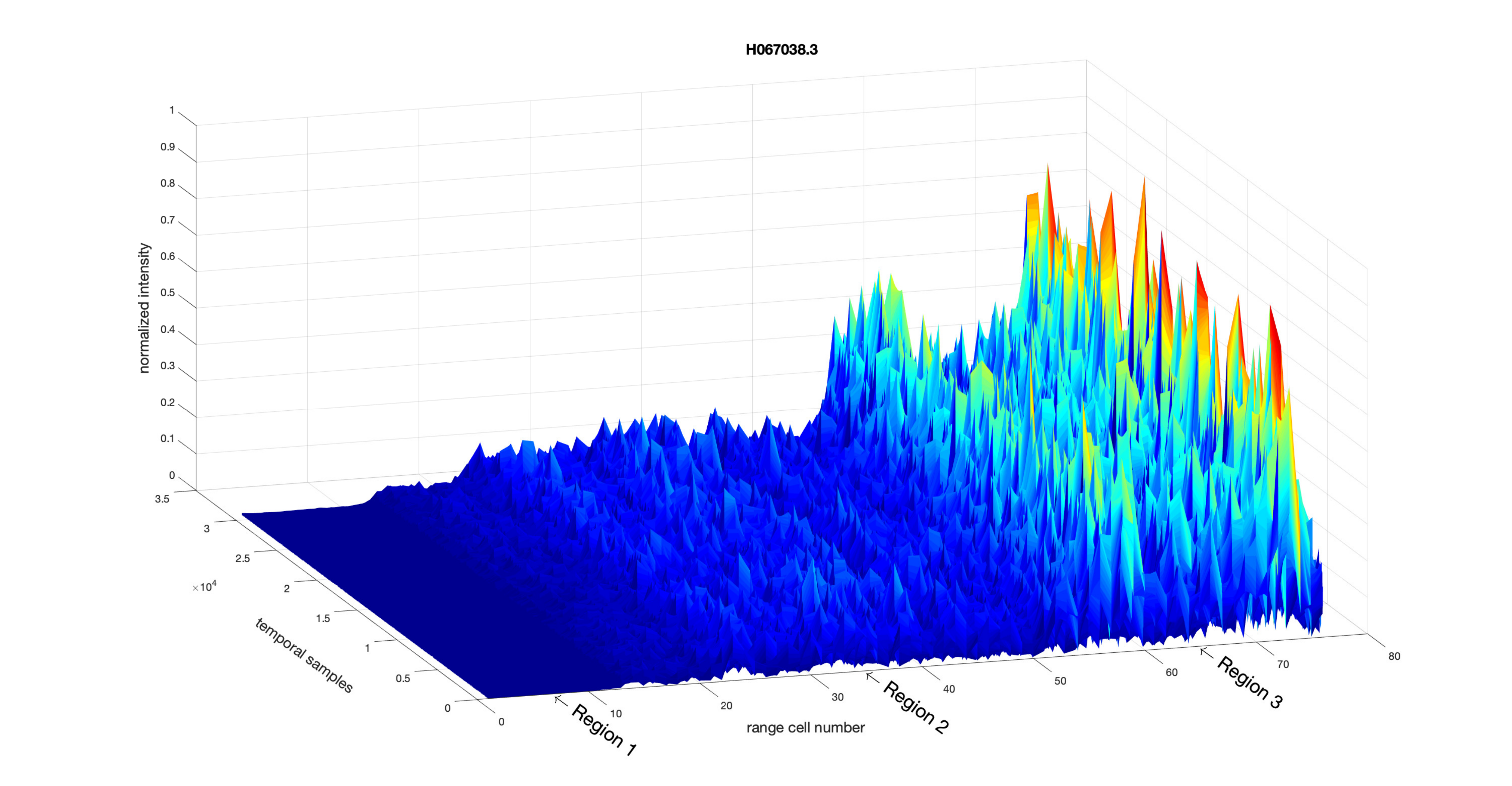}
       \caption{3-D normalized intensity field of clutter returns (H067038.3 dataset).}
    \label{fig:matrixH067038_3}
\end{figure}

\begin{figure}
    \centering
    \includegraphics[width=8.5cm,height=7cm]{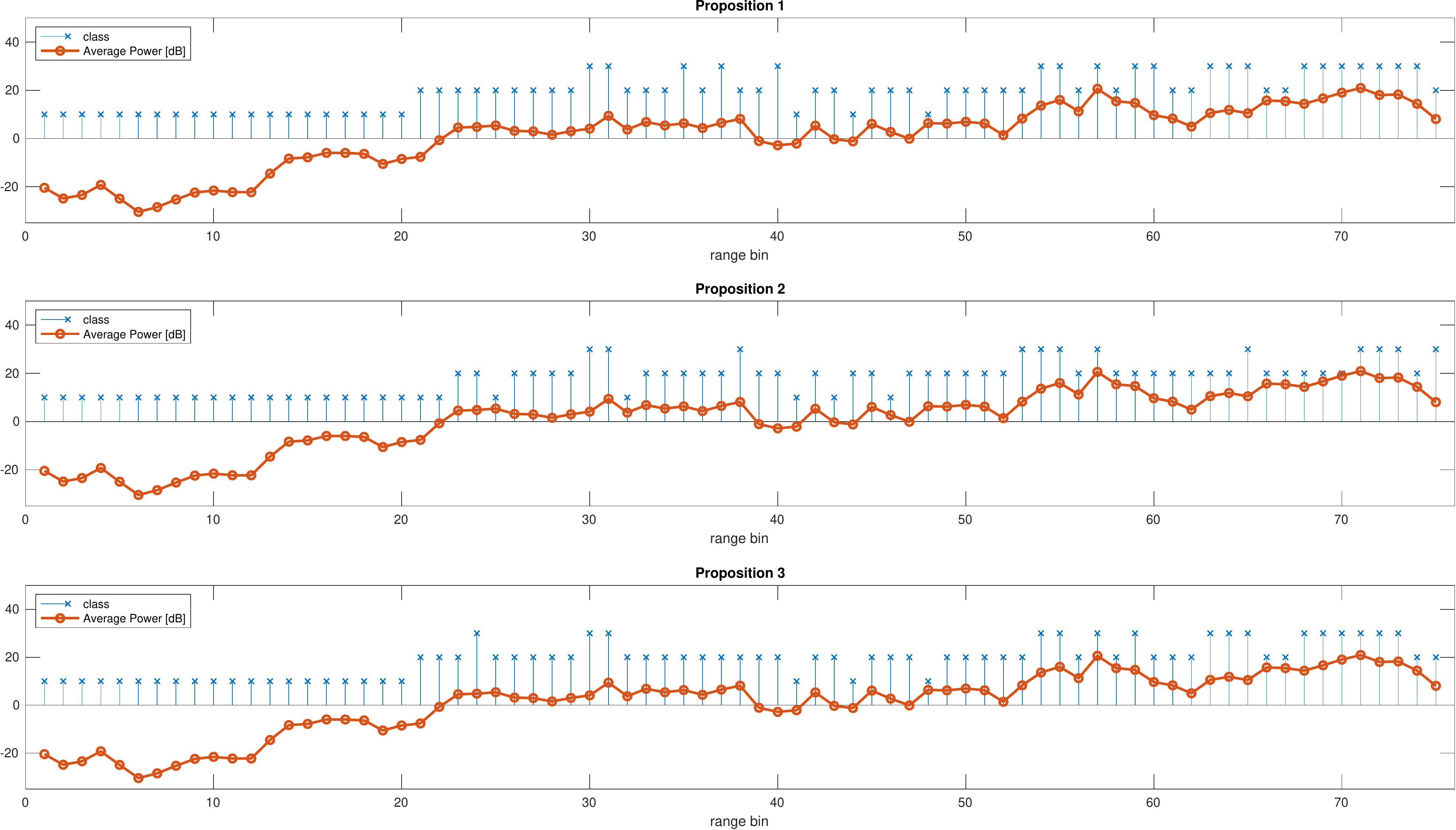}
       \caption{Average power and estimated classes for $L=3$ (H067037.2 dataset).}
    \label{fig:realdata_2001_1_L3}
\end{figure}

\begin{figure}
    \centering
    \includegraphics[width=8.5cm,height=7cm]{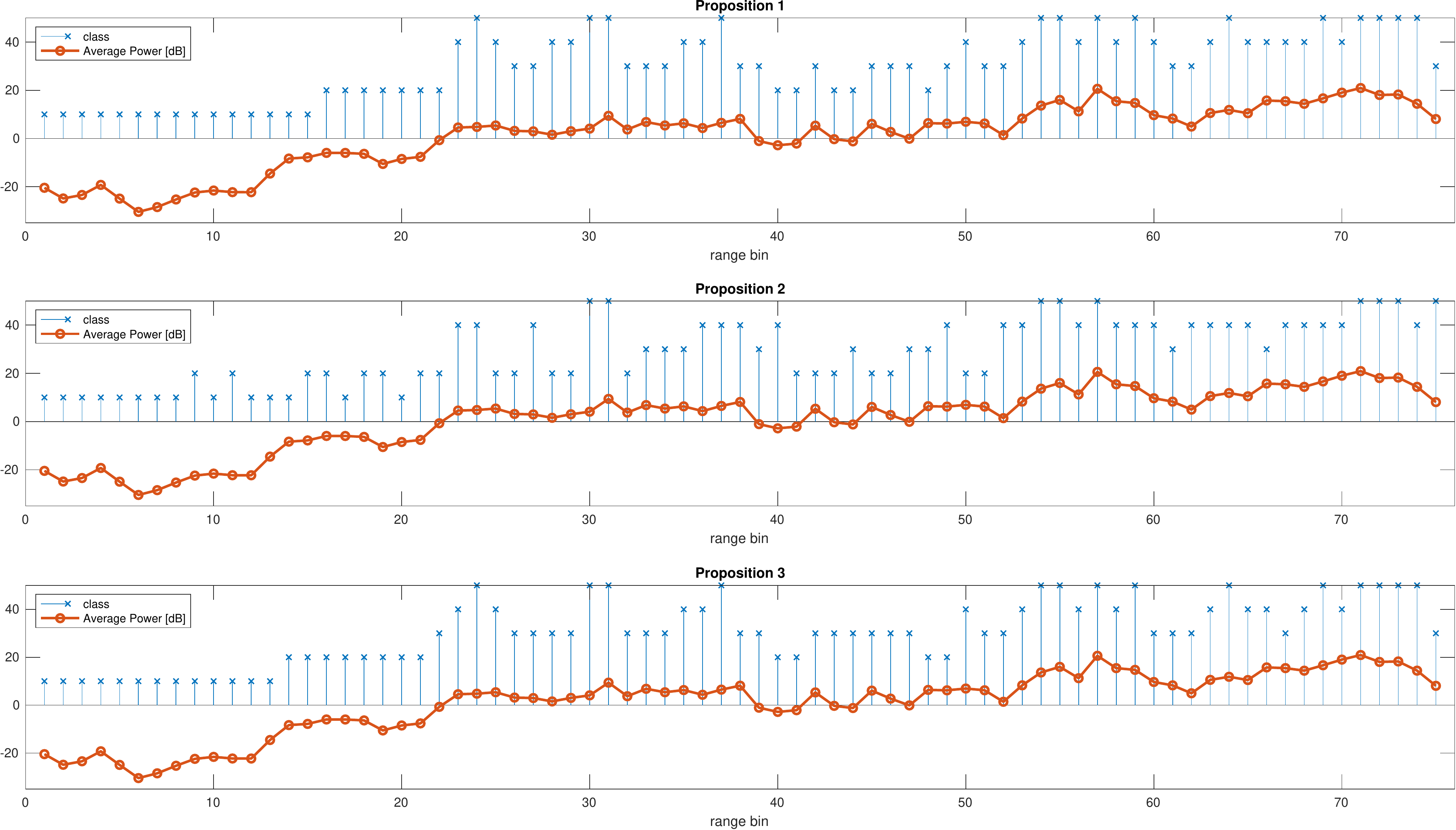}
       \caption{Average power and estimated classes for $L=5$ (H067037.2 dataset).}
    \label{fig:realdata_2001_1_L5}
\end{figure}

\begin{figure}
    \centering
    \includegraphics[width=8.5cm,height=7cm]{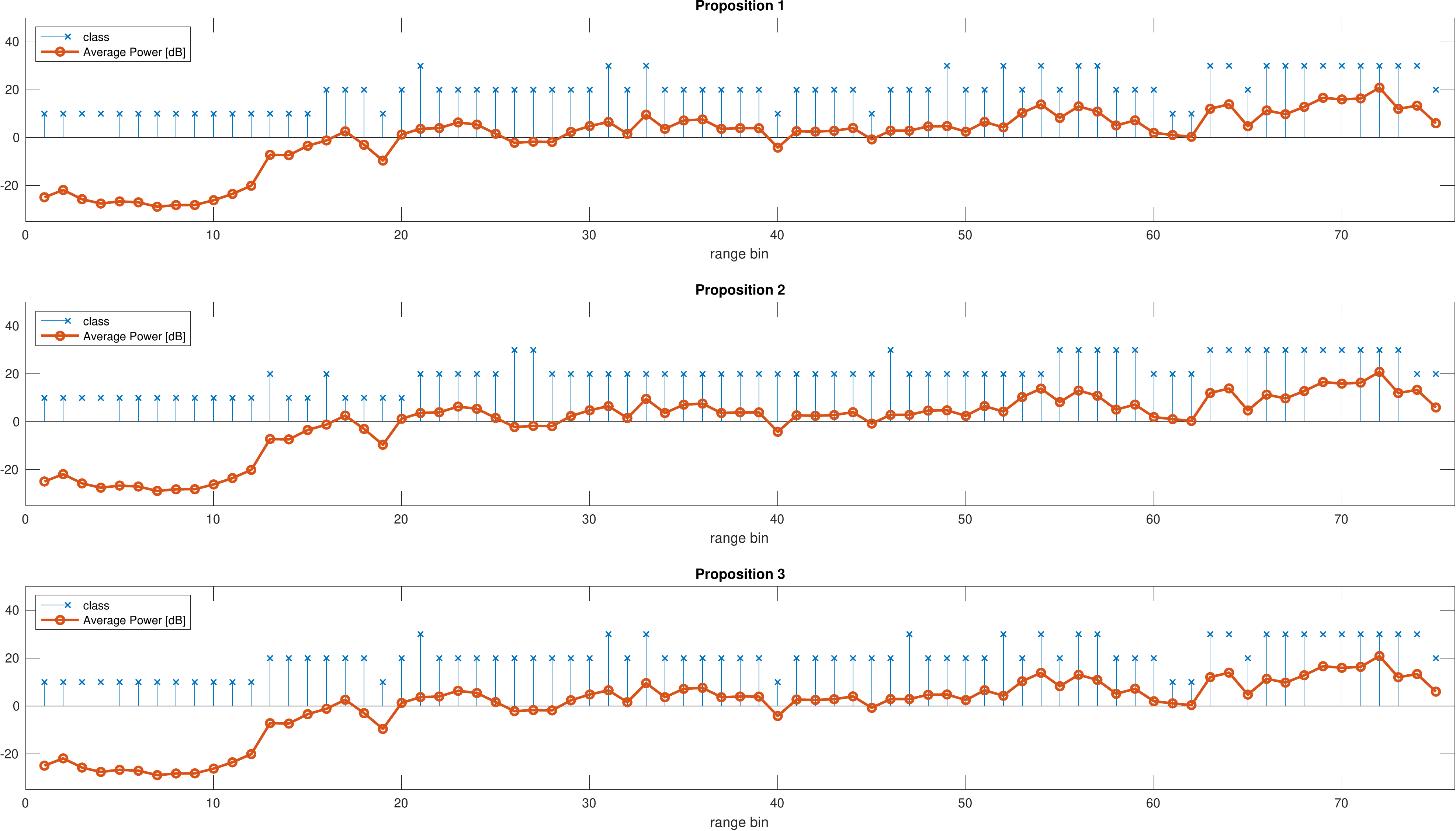}
       \caption{Average power and estimated classes for $L=3$ (H067038.3 dataset).}
    \label{fig:realdata_23001_86_L3}
\end{figure}

\begin{figure}
    \centering
    \includegraphics[width=8.5cm,height=7cm]{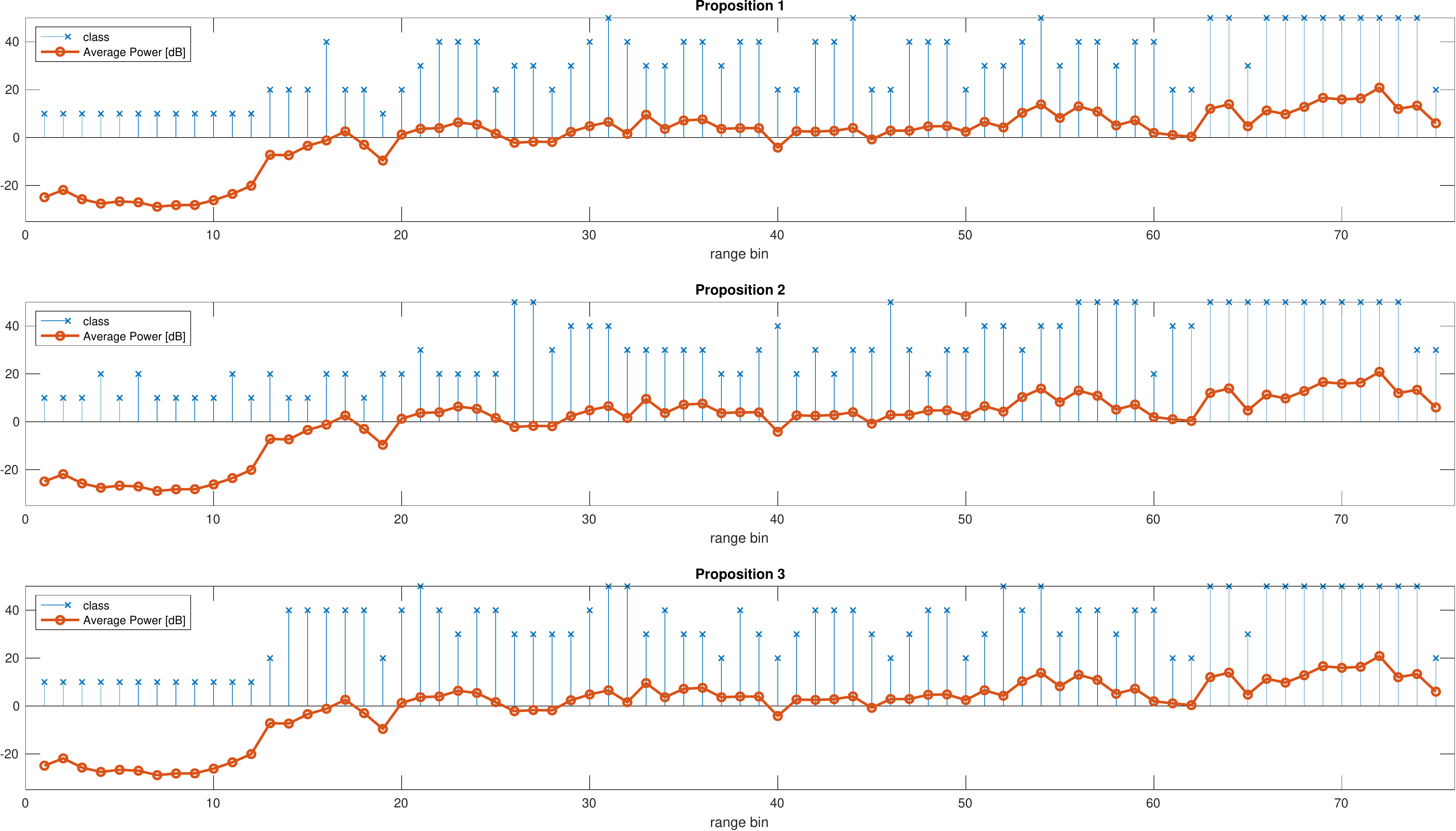}
       \caption{Average power and estimated classes for $L=5$ (H067038.3 dataset).}
    \label{fig:realdata_23001_86_L5}
\end{figure}

\section{Conclusions}
\label{Sec:Conclusions}
This paper has proposed several algorithms to classify clutter radar echoes with the goal
of partitioning the possibly heterogeneous training dataset into homogeneous subsets, which, then, can be used
for estimation/detection purposes. The algorithms have been designed using the EM algorithm in conjunction with the latent variable model. More precisely,
considering three different structures for the clutter covariance matrix (from the most
general case of a Hermitian structure to the specific one where diagonal loading is accounted for)
three different classification architectures have been introduced.
Performance analysis for both simulated and real data has clearly shown the capability of the proposed
approach to solve the problem of clutter data clustering. More importantly, these schemes can be used
as preliminary stage of a detection architecture, where the detection stage exploits the information
provided by the classifier to process homogeneous data.


Future research tracks include the design of clustering algorithms in the presence of outliers, which can be discarded once identified. Another issue is related to further structures for the clutter covariance matrix that can improve the estimation quality and, hence, detection performance of those receivers relying on such estimates. Finally, the 
design of architectures for the joint detection and classification of clutter edges represent an important 
extension of this work. All the above topics represent the current research activity.

\appendices

\section{Proof of Proposition \ref{Prop:EM-M_estimates01}}
\label{App:ProofProp1}
Let us consider the following problem 
\be
\widehat{\bsigma}^{(h)}=
\arg\max_{\bsigma}
g_1(\bM_1,\ldots,\bM_L),
\ee
which is tantamount to solving 
\be
\widehat{\bM}_l^{(h)}=
\arg\max_{\bM_l}
\underbrace{\sum_{k=1}^K q_k^{(h-1)}(l)
\left[
-\log\det(\bM_l)-\bz_k^\dag\bM_l^{-1}\bz_k
\right]}_{d({\boldmath M}_l)}
\ee
for each $l=1,\ldots,L$. To this end, we set to zero the first derivative of $d(\bM_l)$ with 
respect to $\bM_l$ \cite{hjorungnes2011complex}, namely
\begin{multline}
\frac{\partial}{\partial \bM_l}
[d(\bM_l)]= -(\bM_l^T)^{-1}\sum_{k=1}^K q_k^{(h-1)}(l)
\\
+(\bM_l^T)^{-1} \left[\sum_{k=1}^K q_k^{(h-1)}(l) \bz_k\bz_k^\dag\right]^T(\bM_l^T)^{-1}=\bzero.
\end{multline}
The solution of the above equation is given by
\be
\widehat{\bM}_l^{(h)}=
\frac{\ds \sum_{k=1}^K q_k^{(h-1)}(l) \bz_k\bz_k^\dag}
{\ds \sum_{k=1}^K q_k^{(h-1)}(l)},
\ee
and the proof is complete.

\section{Proof of Proposition \ref{Prop:EM-M_estimates02}}
\label{App:ProofProp2}
In order to come up with the estimates of the $\sigma^2_{c,l}$s and $\bM$,
we set to zero the first derivatives of $g_2(\bsigma^2_c,\bM)$ 
(with respect to  the $\sigma^2_{c,l}$s and $\bM$), namely
\begin{multline}
\forall l=1,\ldots,L: \ \frac{ \partial g_2(\bsigma^2_c,\bM)}{\partial \sigma^2_{c,l}} =
-\sum_{k=1}^K  q_k^{(h-1)}(l)
\\ 
\times
\left( \frac{N}{\sigma^2_{c,l}} - \frac{1}{\sigma^4_{c,l}}
\bz_k^{\dagger} \bM^{-1} \bz_k
\right)=0
\end{multline}
and 
\begin{multline}
\frac{ \partial g_2(\bsigma^2_c,\bM)}{\partial \bM} = -
\sum_{k=1}^K  \sum_{l=1}^L q_k^{(h-1)}(l)
\\ 
\times
\left( \bM^{-1} - \bM^{-1}  \frac{1}{\sigma^2_{c,l}}
\bz_k \bz_k^{\dagger} \bM^{-1} 
\right)^T=\bzero.
\end{multline}
The equations can be re-written as
\be
\sigma^2_{c,l} =
\frac{\sum_{k=1}^K  q_k^{(h-1)}(l)  \bz_k^{\dagger} \bM^{-1} \bz_k}{N \sum_{k=1}^K  q_k^{(h-1)}(l)}, 
\quad l=1, \ldots, L,
\label{eq1_EM}
\ee
and 
\be
\bM =\frac{1}{K}
{\sum_{k=1}^K \sum_{l=1}^L  q_k^{(h-1)}(l) 
\frac{\bz_k \bz_k^{\dagger}}{\sigma^2_{c,l}}},
\label{eq2_EM}
\ee
respectively,
where we have used the fact that
\be
\sum_{k=1}^K \sum_{l=1}^L  q_k^{(h-1)}(l)=K.
\ee
Since the equation system formed by \eqref{eq1_EM} and \eqref{eq2_EM} 
does not admit a closed-form solution,
we propose to resort to alternating maximization;
based on $(\hat{\sigma}^{2}_{c,l})^{{(h-1)}}$ and $\widehat{\bM}^{(h-1)}$
we first compute the $(\hat{\sigma}^{2}_{c,l})^{{(1),(h)}}$s
by plugging $\widehat{\bM}^{(h-1)}$
into eqs. (\ref{eq1_EM}); then, we compute $\widehat{\bM}^{(1),(h)}$ by plugging
the $(\hat{\sigma}^{2}_{c,l})^{{(1),(h)}}$s into eq. (\ref{eq2_EM}). 
This procedure can be iterated obtaining, after $t$ iterations, 
the $(\hat{\sigma}^2_{c,l})^{{(t),(h)}}$s and $\widehat{\bM}^{(t),(h)}$.
To conclude the proof we observe
that both EM and alternating maximization lead to a non decreasing sequence of likelihood values \cite{Recursive}.


\section{Proof of Proposition \ref{Prop:EM-M_estimates03}}
\label{App:ProofProp3}
First we re-write \eqref{eqn:objectiveFunctionProp3}
as follows
\begin{align*}
g_3(\sigma^2_n,\bR_1,\ldots,\bR_L) 
&= \sum_{k=1}^K\sum_{l=1}^L q_k^{(h-1)}(l)
\left[ -\log\det(\sigma^2_n\bI \right.
\\  &+ \left. \bR_l) 
-N\log\pi-\bz_k^\dag (\sigma^2_n+\bR_l)^{-1} \bz_k \right]
\end{align*}
and also as
\begin{align}
\nonumber
g^{\prime}_3(\sigma^2_n,\bR_1,\ldots,\bR_L) 
&=
\sum_{k=1}^K\sum_{l=1}^L q_k^{(h-1)}(l)
\left\{ -\log\det(\sigma^2_n\bI \right. 
\\ &+ \left. \bR_l) 
-\tr[(\sigma^2_n+\bR_l)^{-1} \bS_k] \right\}
\label{eqn:maxProb}
\end{align}
where $\bS_k=\bz_k\bz_k^\dag$. Now, let us consider the eigendecomposition of $\bR_l$, namely
$$
\bR_l = \bU_l \bLambda_l \bU_l^\dag
$$
where $\bU_l\in\C^{N\times N}$ 
is a unitary matrix whose columns are the eigenvectors of $\bR_l$ while
$\Lambda_l$ is the corresponding diagonal matrix
of the eigenvalues of $\bR_l$; $\bLambda_l$ can be represented as 
$\bLambda_l=\diag(\lambda_{l,1},\ldots,\lambda_{l,r_l},0,
\ldots,0)\in\R^{N\times N}$ with $\lambda_{l,1}\geq\ldots\geq \lambda_{l,r_l}>0$. 
It follows that the objective function becomes
\begin{align*}
\nonumber
&g^{\prime}_3(\sigma^2_n,\bR_1,\ldots,\bR_L) 
= \sum_{l=1}^L\sum_{k=1}^K  q_k^{(h-1)}(l)
\left\{ -\log\det(\sigma^2_n\bI \right.
\\ \nonumber &+ \left. \bLambda_l) -\tr[\bU_l (\sigma^2_n\bI+\bLambda_l)^{-1} \bU_l^\dag \bS_k]
\right\}
\\ \nonumber
&=\sum_{l=1}^L \Bigg\{  -\left( \sum_{k=1}^K  q_k^{(h-1)}(l)\right)\log \left[(\sigma^2_n)^{N-r_l}\prod_{m=1}^{r_l}
(\sigma^2_n+\lambda_{l,m})\right] \nonumber
\\
& -\tr\left[\bU_l (\sigma^2_n\bI+\bLambda_l)^{-1} \bU_l^\dag \bS_{l}^{(h-1)}\right]\Bigg\},
\end{align*}
where
$$
\bS_{l}^{(h-1)}=\sum_{k=1}^K  q_k^{(h-1)}(l)\bS_k.
$$
Replacing $\bS_{l}^{(h-1)}$ by its eigendecomposition, we also come up with
\begin{align*}
&\sum_{l=1}^L \Bigg\{  -\left( \sum_{k=1}^K  q_k^{(h-1)}(l)\right)\log \left[(\sigma^2_n)^{N-r_l}\prod_{m=1}^{r_l}
(\sigma^2_n+\lambda_{l,m})\right] \nonumber
\\
& -\tr\left[\bU_l (\sigma^2_n\bI+\bLambda_l)^{-1} \bU_l^\dag \bO_{l}^{(h-1)}\bGamma_{l}^{(h-1)}(\bO^{(h-1)}_{l})^\dag\right]\Bigg\}
\end{align*}
where $\bGamma_{l}^{(h-1)}=\diag(\gamma_{l,1}^{(h-1)},\ldots,\gamma_{l,N}^{(h-1)})$ with
$\gamma_{l,1}^{(h-1)}\geq\ldots\geq \gamma_{l,N}^{(h-1)}$ being the eigenvalues of $\bS_l^{(h-1)}$ and
$\bO^{(h-1)}_{l}$ the unitary matrix of the corresponding eigenvectors.
As a consequence, the objective function \eqref{eqn:maxProb} can also be recast as
\begin{multline*}
g^{\prime\prime}_3(\sigma^2_n,\bV_l, \bLambda_l, l=1,\ldots,L) =
\sum_{l=1}^L \Bigg\{  -q^{(h-1)}(l)
\\
\times\log \left[(\sigma^2_n)^{N-r_l}\prod_{m=1}^{r_l}
(\sigma^2_n+\lambda_{l,m})\right] 
\\
-\tr\left[\bV_l (\sigma^2_n\bI+\bLambda_l)^{-1} \bV_l^\dag\bGamma_{l}^{(h-1)}\right]\Bigg\}
\end{multline*}
where $q^{(h-1)}(l)=\sum_{k=1}^K  q_k^{(h-1)}(l)$ and $\bV_l=(\bO^{(h-1)}_{l})^\dag\bU_l$. Exploiting
{\em Theorem 1} of \cite{mirsky1959trace}, it is possible to show that $\forall l =1,\dots,L$
$$
\arg\max_{\bV_l} -\tr\Big[\bV_l (\sigma^2_n\bI+\bLambda_l)^{-1} 
\bV_l^\dag\bGamma_{l}^{(h-1)}\Big]= \bI,
$$
which implies that $\bU_l^{(h)}=\bO^{(h-1)}_{l}$. Then, we obtain
\begin{align}
\nonumber
&g^{\prime\prime\prime}_3(\sigma^2_n,\bLambda_l, l=1,\ldots,L) 
\\ 
\label{eqn:optProbSigma}
&=
\max_{\bV_l \atop l=1,\ldots,L} g^{\prime\prime}_3(\sigma^2_n,\bV_l, \bLambda_l, l=1,\ldots,L) 
\\ \nonumber &=
\sum_{l=1}^L \Bigg\{  -q^{(h-1)}(l)({N-r_l})\log \sigma^2_n   
-q^{(h-1)}(l)
\\ \nonumber &\times\sum_{m=1}^{r_l}
\log(\sigma^2_n+\lambda_{l,m})
-\sum_{m=1}^{r_l}\frac{\gamma_{l,m}^{(h-1)}}{\sigma^2_n+\lambda_{l,m}}
-\sum_{m=r_l+1}^{N}\frac{\gamma_{l,m}^{(h-1)}}{\sigma^2_n} \Bigg\}.
\end{align}
As the next step towards the final result, we set to zero the first derivative of the above objective function with respect to $\lambda_{l,m}$,
$m=1,\ldots,r_l$, namely
\begin{align}
&\frac{\partial}{\partial \lambda_{l,m}}\left[
-q^{(h-1)}(l)\log(\sigma^2_n+\lambda_{l,m})
-\frac{\gamma_{l,m}^{(h-1)}}{\sigma^2_n+\lambda_{l,m}}
\right]=0 \nonumber
\\
&\Rightarrow
-q^{(h-1)}(l)\frac{1}{(\sigma^2_n+\lambda_{l,m})}
+\frac{\gamma_{l,m}^{(h-1)}}{(\sigma^2_n+\lambda_{l,m})^2}=0 \nonumber
\\
&\Rightarrow
\hat{\lambda}_{l,m} = 
\left\{
\begin{array}{ll}
\frac{\gamma_{l,m}^{(h-1)}}{q^{(h-1)}(l)}-\sigma^2_n, &  \sigma^2_n < \frac{\gamma_{l,m}^{(h-1)}}{q^{(h-1)}(l)},
\\ 0, & \mbox{otherwise}.
\end{array}
\right.
\end{align}
After replacing ${\lambda}_{l,m}$ with $\hat{\lambda}_{l,m}$ in \eqref{eqn:optProbSigma}, 
the last optimization is
\begin{multline*}
\max_{\sigma^2_n}
\sum_{l=1}^L \Bigg\{  -q^{(h-1)}(l)({N-r_l})\log \sigma^2_n   
-q^{(h-1)}(l)\sum_{m=1}^{r_l}
\\
\log\left(\frac{\gamma_{l,m}^{(h-1)}}{q^{(h-1)}(l)}\right)
-r_l q^{(h-1)}(l)
-\sum_{m=r_l+1}^{N}\frac{\gamma_{l,m}^{(h-1)}}{\sigma^2_n} \Bigg\},
\end{multline*}
which can be solved by finding the zeros of the following function
\begin{align*}
&\frac{\partial}{\partial \sigma^2_n}\left[ 
\sum_{l=1}^L \Bigg\{  -q^{(h-1)}(l)({N-r_l})\log \sigma^2_n
-\sum_{m=r_l+1}^{N}\frac{\gamma_{l,m}^{(h-1)}}{\sigma^2_n}
\right] \nonumber
\\
&=-\frac{1}{\sigma^2_n}\sum_{l=1}^L q^{(h-1)}(l)({N-r_l})+\frac{1}{(\sigma^2_n)^2}
\sum_{l=1}^L \sum_{m=r_l+1}^{N}{\gamma_{l,m}^{(h-1)}}.
\end{align*}
The result is
\be
\hat{\sigma}^{2(h)}_n=\frac{\sum_{l=1}^L \sum_{m=r_l+1}^{N}{\gamma_{l,m}^{(h-1)}}}
{\sum_{l=1}^L q^{(h-1)}(l)({N-r_l})}.
\ee
Finally, the estimate of $\lambda_{l,m}$, $l=1, \ldots, L$, $m=1,\ldots,r_l$,  is given by
\be
\hat{\lambda}^{(h)}_{l,m}=
\left\{
\begin{array}{ll}
\frac{\gamma_{l,m}^{(h-1)}}{q^{(h-1)}(l)}-\hat{\sigma}^{2 (h)}_n, 
&  \hat{\sigma}^{2 (h)}_n < \frac{\gamma_{l,m}^{(h-1)}}{q^{(h-1)}(l)},
\\ 0, & \mbox{otherwise},
\end{array}
\right.
\ee
and the proof is complete.

%
\bibliographystyle{IEEEtran}
\bibliography{group_bib}
%
%
%
\end{document}